\newtheorem{theorem}{Theorem}
\newtheorem*{theorem*}{Theorem}
\newtheorem{lemma}{Lemma}
\newtheorem{claim}{Claim}
\newtheorem{remark}{Remark}
\theoremstyle{definition}
\def \eps{\varepsilon}
\title{Dynamic Vector Bin Packing for \\Online Resource Allocation in the Cloud}
\author{
Aniket Murhekar\footnote{University of Illinois at Urbana-Champaign, USA}\\
\texttt{\small aniket2@illinois.edu}
\and
David Arbour\footnote{Adobe Research, USA}\\
\texttt{\small arbour@adobe.com}\\
\and
Tung Mai\footnote{Adobe Research, USA}\\
\texttt{\small tumai@adobe.com}\\
\and
Anup Rao\footnote{Adobe Research, USA}\\
\texttt{\small anuprao@adobe.com}
}
  \providecommand\BibTeX{{%
    \normalfont B\kern-0.5em{\scshape i\kern-0.25em b}\kern-0.8em\TeX}}}
\newcommand{\R}{\mathcal{R}}
\newcommand{\A}{\mathcal{A}}
\newcommand{\vect}[1]{\mathbf{#1}}
\newcommand{\size}{\mathbf{s}}
\newcommand{\cost}{\mathsf{cost}}
\renewcommand{\span}{\mathsf{span}}
\newcommand{\opt}{\mathsf{OPT}}
\newcommand\norm[1]{\lVert#1\rVert}
\begin{document}

\maketitle

\begin{abstract}
Several cloud-based applications, such as cloud gaming, rent servers to execute jobs which arrive in an online fashion. Each job has a resource demand, such as GPU requirement, and must be dispatched to a cloud server which has enough resources to execute the job, which departs after its completion. Under the ``pay-as-you-go'' billing model, the server rental cost is proportional to the total time that servers are actively running jobs. The problem of efficiently allocating a sequence of online jobs to servers without exceeding the resource capacity of any server while minimizing total server usage time can be modelled as a variant of the dynamic bin packing problem (DBP), called MinUsageTime DBP \cite{LiTangCai2014}. 

In this work, we initiate the study of the problem with multi-dimensional resource demands (e.g. CPU/GPU usage, memory requirement, bandwidth usage, etc.), called MinUsageTime Dynamic Vector Bin Packing (DVBP). We study the competitive ratio (CR) of Any Fit packing algorithms for this problem. We show almost-tight bounds on the CR of three specific Any Fit packing algorithms, namely First Fit, Next Fit, and Move To Front. We prove that the CR of Move To Front is at most $(2\mu+1)d +1$, where $\mu$ is the ratio of the max/min item durations. For $d=1$, this implies a significant improvement over the previously known upper bound of $6\mu+7$ \cite{KamaliLopezOrtiz2015}. We then prove the CR of First Fit and Next Fit are bounded by $(\mu+2)d+1$ and $2\mu d+1$, respectively. Next, we prove a lower bound of $(\mu+1)d$ on the CR of any Any Fit packing algorithm, an improved lower bound of $2\mu d$ for Next Fit, and a lower bound of $2\mu$ for Move To Front in the 1-D case. All our bounds improve or match the best-known bounds for the 1-D case. Finally, we experimentally study the average-case performance of these algorithms on randomly generated synthetic data, and observe that Move To Front outperforms other Any Fit packing algorithms.
\end{abstract}
\maketitle

\section{Introduction}
Bin packing is an extensively studied problem in combinatorial optimization \cite{coffman2013BPsurvey}. The goal of the classical bin packing problem is to pack a given set of items with different sizes into the smallest number of identical bins such that the total size of items in each bin does not exceed the capacity of the bin. The dynamic bin packing problem (DBP) \cite{coffman1983dbp} is a generalization of the classical bin packing problem, where items can arrive and depart over time, and the objective is to minimize the number of bins used over time. Dynamic bin packing naturally models several resource allocation problems, including those arising in cloud computing \cite{dbp-app-2,dbp-app-1}.

Motivated by cloud computing applications where the goal is to dispatch jobs arriving in an online fashion to servers, with the objective of minimizing the server usage time, Li, Tang, and Cai \cite{LiTangCai2014} introduced a variant of dynamic bin packing called \textit{MinUsageTime Dynamic Bin Packing}. In this variant, items appear in an online fashion and must be packed into resource-bounded bins. When an item (job) arrives, it must immediately be dispatched to a bin (server) which has enough resources to accommodate (execute) the job. The objective is to minimize the \textit{total time} that bins are \textit{active}, i.e., contain at least one active item that has not yet departed. Moreover, due to overheads involved in migrating jobs from one server to another, it is assumed that the placement of an item to a bin is irrevocable. The objective function, the total usage time of the bins, naturally models the power consumption or rental cost of the servers. Below we discuss two concrete applications motivating the MinUsageTime Dynamic Bin Packing problem, one faced by cloud service provider and the other by the cloud service user.

\paragraph{Virtual machine placement on physical servers.} A popular way that cloud resource providers offer their services to users is through the use of Virtual Machines (VMs). Users can request VMs with certain resource demands, and in turn cloud resource managers place these VMs on physical servers with sufficient resource capacity to serve the VM requests. Minimizing the total usage time of the physical machines can directly lead to power and cost savings on the cloud provider end \cite{buchbinder21vm,dynamic-right-sizing}. As \cite{hadary2020protean} suggests, even a 1\% improvement in packing efficiency can lead to cost savings of roughly \$100 million per year for Microsoft Azure. By viewing the VM requests as items and the physical servers as bins, the problem of minimizing the usage time of physical machines therefore directly translates to the MinUsageTime DBP problem. 

\paragraph{Cloud gaming and other cloud user applications.} Several organizations offer their services to customers by renting servers (as VMs) from on-demand public cloud providers such as Amazon EC2. They are typically charged according to their server usage times in hourly or monthly basis following the ``pay-as-you-go'' billing model \cite{AmazonEC}. Minimizing the organization's server renting cost is therefore equivalent to minimizing the usage time of the rented servers, thus reducing to the MinUsageTime DBP problem where customer jobs are viewed as items and rented servers as bins \cite{LiTangCai2014,KamaliLopezOrtiz2015,TangLiRenCai2016,LiTangCai2016,RenTangLiCai2017,Azar2019clairvoyant}. Organizations such as GaiKai \cite{gaikai}, OnLive\cite{onlive}, and StreamMyGame \cite{streammygame} offer cloud based gaming services where computer games run on rented cloud servers, thereby saving players from the overheads involved in set-up and maintenance of the hardware/software infrastructure required for the game. A request from a customer to play a game is dispatched to a gaming server which has enough resources such as GPU or bandwidth to run the game instance, which runs until the customer stops playing the game. In this context, the gaming service providers can greatly benefit by employing efficient algorithms that dispatch customers' game requests to rented servers minimize the server rental cost.\\~

{MinUsageTime Dynamic Bin Packing} is therefore a problem of commercial and industrial importance, and has consequently also received theoretical interest in recent years to analyze the performance of online algorithms for the problem \cite{LiTangCai2014,KamaliLopezOrtiz2015,TangLiRenCai2016,LiTangCai2016,RenTangLiCai2017,RenTang2016,Azar2019clairvoyant,buchbinder21vm}. The performance of an online algorithm is usually measured in terms of its \textit{competitive ratio} (CR) \cite{competitiveratio}, which is the worst-case ratio between the quality (e.g. total server renting cost) of algorithm's solution to the quality of the solution produced by an optimal, offline algorithm. In this paper, we study the \textit{non-clairvoyant} version of the problem, wherein the departure time of an item is unknown upon its arrival. In the context of cloud gaming, this models customers being able to play games for durations unknown to the cloud gaming service.  

Existing work has primarily focused on Any Fit packing algorithms, which is a well-studied family of algorithms for the classical bin packing problem. An Any Fit packing algorithm is an algorithm that opens a new bin only when an incoming item cannot be packed in any of the existing open bins. Any Fit packing algorithms are useful and well-studied because they take decisions based on the current system state and not its history, leading to a desirable simplicity in implementation and explainability, and a low computational and memory footprint. 

Li, Tang, and Cai \cite{LiTangCai2014,LiTangCai2016} showed that the competitive ratio of any Any Fit packing algorithm for the MinUsageTime DBP problem is at least $\mu+1$, where $\mu$ is the ratio of the max/min item durations. A series of works \cite{LiTangCai2014,LiTangCai2016,TangLiRenCai2016,RenTangLiCai2017} showed that the competitive ratio of First Fit, a specific Any Fit packing algorithm which tries to pack a new item into the earliest opened bin that can accommodate the item, is at most $\mu+3$. Likewise, Next Fit, which keeps only one open bin at a time to pack items, was shown to have a CR of at least $2\mu$ \cite{TangLiRenCai2016,RenTangLiCai2017} and at most $2\mu+1$ \cite{KamaliLopezOrtiz2015}. On the other hand, Best Fit, which tries to pack a new item into bin with highest load, was shown to have an unbounded CR \cite{LiTangCai2016}. Kamali and L\'{o}pez-Ortiz \cite{KamaliLopezOrtiz2015} studied another Any Fit packing algorithm called Move To Front, which tries to pack a new item into the bin which was most recently used. They showed that Move To Front has an (asymptotic) competitive ratio of at most $6\mu+7$, and conjectured that the CR is at most $2\mu+1$. They also performed an average-case experimental study of these algorithms, and found that Move To Front had the best average-case performance, closely followed by First Fit and Best Fit. 

\paragraph{Modelling Multi-dimensional Resources Demands.} All of the above previous works assumed the item sizes to be one-dimensional. They assume items/jobs have a single dominant resource, such as CPU or GPU demand. However, in practice, the resources demands of an item/job such as a VM request or a game instance are multi-dimensional, e.g., CPU and GPU usage, memory requirement, bandwidth usage, etc. In the bin packing literature, the multi-dimensional version is a problem of great significance and is extensively studied \cite{multidimbpsurvey, panigrahy2011heuristics}. The multidimensional nature of demand usually makes the problem much more challenging \cite{woeginger1997}. In this work, we study the generalization of the MinUsageTime DBP problem called \textit{MinUsageTime Dynamic Vector Bin Packing} (DVBP) where the sizes of items and bins are $d$-dimensional vectors. The design of online algorithms for DVBP and the analysis of their competitive ratios is therefore a natural and practically important problem, and was indicated as an important direction for future work by previous papers \cite{RenTang2016,TangLiRenCai2016,RenTangLiCai2017}. 

\subsection{Our Contributions}
In this work, we initiate the study of the multi-dimensional version of the MinUsageTime DBP problem, called MinUsageTime Dynamic Vector Bin Packing (henceforth referred to simply as DVBP), where item and bin sizes are $d$-dimensional vectors. We analyze the competitive ratios of Any Fit packing algorithms for the problem, including four specific algorithms: First Fit, Next Fit, Best Fit and Move To Front. Table~\ref{tab:results} summarizes the best known bounds on the CR of these algorithms, and contrasts our results with previous work. Our contributions are summarized below. 

\begin{itemize}[leftmargin=*]
\item We prove an upper bound of $(2\mu+1) d +1$ on the competitive ratio of Move To Front for DVBP. For $d=1$, this implies a significant improvement on the previously known upper bound of $6\mu+7$ shown by Kamali and L\'{o}pez-Ortiz~\cite{KamaliLopezOrtiz2015} to $2\mu+2$, and nearly settles their conjecture of the CR being $2\mu+1$. Central to our result is a novel decomposition of the usage periods of the bins used by Move To Front into two classes of intervals, and carefully analyzing the cumulative cost of intervals in each class.
\item We prove an upper bound of $(\mu+2)d+1$ on the competitive ratio of First Fit, and of $2\mu d+1$ for Next Fit. These results rely on new lower bounds on the cost of the optimum solution for the $d$-D case. Our upper bounds then follow by combining these bounds with analysis techniques inspired from upper bound results for the 1-D case \cite{RenTangLiCai2017,KamaliLopezOrtiz2015}. Note that the competitive ratio of Best Fit is unbounded even for the 1-D case \cite{LiTangCai2016}. 
\item We prove a lower bound of $(\mu+1)d$ on the competitive ratio of any Any Fit packing algorithm for DVBP. We also show a lower bound of $2\mu d$ on the CR of Next Fit and of $\max\{2\mu,(\mu+1)d\}$ for Move To Front. In conjunction with our upper bound results, these results show almost-tightness for the CR of First Fit and Next Fit for the $d$-D case, and of Move To Front for the 1-D case. Our results improve or match all known lower bounds for the 1-D case \cite{LiTangCai2014,LiTangCai2016,RenTangLiCai2017,TangLiRenCai2016,KamaliLopezOrtiz2015}. Due to the multi-dimensionality of the problem, lower bound results of the 1-D case do not directly translate to the $d$-D case, and hence we design new constructions to establish the lower bounds. 

At a high level, our constructions use carefully-designed sequences of items which force an Any Fit algorithm to open $\Omega(d\cdot k)$ bins for a parameter $k$, each of which contain an item of small size but long-duration, thus leading to a cost of $\approx \mu$ per bin. The optimal solution however packs all the small items into a single bin with cost $\approx \mu$ and the other items into $k$ bins with cost $\approx 1$, resulting in a total cost of $O(k + \mu)$, thus implying CR of $\Omega(\mu d)$. 
\item We perform an average-case experimental study of these algorithms on randomly generated synthetic data. We observe that Move To Front outperforms other Any Fit packing algorithms, with First Fit and Best Fit also performing well on average. 
\end{itemize}

\begin{table*}[!t]
\centering
\begin{tabular}{|p{0.15\textwidth}||p{0.16\textwidth}|p{0.16\textwidth}||p{0.18\textwidth}|p{0.18\textwidth}|}\hline
\textbf{Algorithm} & \textbf{Lower Bound $(d=1)$} & \textbf{Upper Bound $(d=1)$} & \textbf{Lower Bound $(d\ge 1)$} & \textbf{Upper Bound $(d\ge 1)$} \\\hhline{|=#=|=#=|=|}
Any Fit & $\mu+1$ \cite{LiTangCai2016,RenTangLiCai2017} & $\infty$ & \cellcolor{blue!25} $(\mu+1)d$ (Thm.~\ref{thm:lb-anyfit}) & $\infty$ \\\hline
Move To Front & \cellcolor{blue!25} $2\mu$ (Thm.~\ref{thm:lb-mtf}) & \cellcolor{blue!25} $2\mu + 2$ (Thm.~\ref{thm:ub-mtf}), improves \cite{KamaliLopezOrtiz2015} & \cellcolor{blue!25} $\max\{2\mu,(\mu+1)d\}$ (Thm.~\ref{thm:lb-mtf})& \cellcolor{blue!25} $(2\mu +1) d + 1~~~~~$ (Thm.~\ref{thm:ub-mtf}) \\\hline
First Fit & $\mu+1$ \cite{LiTangCai2016,RenTangLiCai2017} & $\mu+3$ \cite{RenTangLiCai2017} & \cellcolor{blue!25} $(\mu+1)d$ (Thm.~\ref{thm:lb-anyfit}) & \cellcolor{blue!25} $(\mu+2)d+1~~~~~$ (Thm.~\ref{thm:ub-ff})\\\hline
Next Fit & $2\mu$ \cite{TangLiRenCai2016} & $2\mu+1$ \cite{KamaliLopezOrtiz2015} & \cellcolor{blue!25} $2\mu d$ (Thm.~\ref{thm:lb-nextfit})& \cellcolor{blue!25} $2\mu d + 1$ (Thm.~\ref{thm:ub-nextfit})\\\hline
Best Fit & Unbounded \cite{LiTangCai2016} & $\infty$ & Unbounded \cite{LiTangCai2016} & $\infty$ \\\hline
\end{tabular}
\caption{\normalfont Summary of the best known upper and lower bounds on the competitive ratio of algorithms for the MinUsageTime Dynamic Vector Bin Packing problem in $d$ dimensions. $\mu$ denotes the ratio of max/min item durations. Colored cells highlight our results.}\label{tab:results}
\end{table*}

Given its bounded competitive ratio indicating good performance against adversarial examples, as well as good average-case performance, our theoretical and experimental results lead us to concur with the recommendation of \cite{KamaliLopezOrtiz2015} that Move To Front is the algorithm of choice for practical solutions to the DVBP problem, even in higher dimensions.

\subsection{Further Related Work}
Classical bin packing is known to be NP-hard even in the offline case \cite{gareyjohnson}. There is extensive work on designing algorithms with good competitive ratio for online versions of this problem \cite{Johnson1973NearoptimalBP,coffman07bp,coffman2013BPsurvey}, with 1.54037 and 1.58889 being the best-known lower and upper bounds \cite{bplb,bpub}. In online vector bin packing, the item sizes are $d$-dimensional vectors. Garey et al. \cite{GAREYdbp} showed that a generalization of First Fit has a CR of $d+0.7$, and Azar et al.~\cite{azarlb} showed an information-theoretic lower bound of $\Omega(d^{1-\eps})$. For further results on multi-dimensional versions of bin packing, we refer the reader to the survey \cite{multidimbpsurvey}. On the practical side, Panigrahy et al. \cite{panigrahy2011heuristics} studied heuristics for the offline vector bin packing problem. 

Dynamic bin packing with the objective of minimizing the number of bins is also the subject of several works \cite{coffman1983dbp,dbp1,dbp2,dbp3}. Coffman et al. \cite{coffman1983dbp} showed that First Fit has a competitive ratio of between 2.75 to 2.897, and Wong et al. \cite{wong_dbp_lb} showed a lower bound of 2.667 on the CR of any online algorithm. A further generalization called the fully dynamic bin packing problem, in which already packed items can be moved to different bins, has also been studied in \cite{dbp3}.

The MinUsageTime dynamic bin packing problem has been studied in several recent works \cite{LiTangCai2014,KamaliLopezOrtiz2015,TangLiRenCai2016,LiTangCai2016,RenTangLiCai2017,RenTang2016,Azar2019clairvoyant,buchbinder21vm}; Table~\ref{tab:results} cites the relevant prior work on the non-clairvoyant version of the problem. In the clairvoyant version of the problem the departure time of an item is known when it arrives \cite{RenTang2016,Azar2019clairvoyant}. This problem is known to have an algorithm with a $O(\sqrt{\log \mu})$ competitive ratio, with a matching lower bound \cite{Azar2019clairvoyant}. The interval scheduling problem \cite{intervalscheduling} is also closely related; see \cite{RenTang2016,buchbinder21vm} and references therein. In the presence of additional information about future load, algorithms with improved CR were presented by \cite{buchbinder21vm}. To the best of our knowledge, the multi-dimensional version of the MinUsageTime DBP problem has not been studied, though it finds mention as a direction for future work in \cite{RenTang2016,TangLiRenCai2016,RenTangLiCai2017}.

\paragraph{Organization.} The rest of the paper is organized as follows. Section~\ref{sec:prelim} introduces notation, relevant definitions, packing algorithms, and useful preliminary observations. Section~\ref{sec:mtf}, \ref{sec:ff}, and \ref{sec:nf} establish upper bounds on the competitive ratios of Move To Front, First Fit, and Next Fit, respectively. Section~\ref{sec:lb} presents lower bounds on competitive ratio of any Any Fit packing algorithm and certain improved lower bounds for specific algorithms. Section~\ref{sec:experiments} discusses our experimental results examining the average-case performance of various Any Fit packing algorithms on randomly generated synthetic data. Finally, some concluding remarks and directions for future work are presented in Section~\ref{sec:conclusion}.

\section{Notation and Preliminaries}\label{sec:prelim}
For $n\in\mathbb{N}$, let $[n]$ denote the set $\{1,2,\dots, n\}$. The $L_\infty$ norm of a vector a vector $\vect{v}\in \mathbb{R}_{\ge 0}^d$ is denoted by $\norm{\vect{v}}_\infty$ and equals $\max_{i\in [d]}\vect{v}_j$. We will use the following simple properties of the $L_\infty$ norm, which are proved in Appendix~\ref{app:prelim} for completeness.
\begin{restatable}{proposition}{propnorm}\label{prop:norm}
The $L_\infty$ norm satisfies the following.
\begin{enumerate}
\item[(i)] For a vector $\vect{v}\in\mathbb{R}_{\ge 0}^d$ and a constant $c\ge 0$, $\norm{c\cdot\vect{v}}_\infty = c\cdot \norm{\vect{v}}_\infty$.
\item[(ii)] For any set of vectors $\vect{v}_1,\dots,\vect{v}_n \in \mathbb{R}_{\ge 0}^d$, we have: \[ \bigg\lVert{\sum_{i=1}^n \vect{v}_i}\bigg\rVert_\infty \le \sum_{i=1 }^n\norm{\vect{v}_i}_\infty \le d\cdot \bigg\lVert{\sum_{i=1}^n \vect{v}_i}\bigg\rVert_\infty.\]
\end{enumerate}
\end{restatable}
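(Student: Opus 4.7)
The plan is to unwind the definition $\norm{\vect{v}}_\infty = \max_{j\in[d]} \vect{v}_j$ (valid here since all coordinates are nonnegative) and apply elementary inequalities coordinate-by-coordinate; there is no real obstacle, just careful bookkeeping of which index ranges over coordinates and which over vectors.

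For part (i), I would pull $c$ outside the maximum using nonnegativity: since $c\ge 0$ and $\vect{v}_j\ge 0$, we have $c\cdot\vect{v}_j \ge 0$ for every $j$, and multiplication by a nonnegative scalar preserves the order of the coordinates, so $\max_j c\cdot\vect{v}_j = c\cdot \max_j \vect{v}_j$. This is a one-line calculation.

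For part (ii), I would prove the two inequalities separately. The first, $\bigl\lVert \sum_i \vect{v}_i\bigr\rVert_\infty \le \sum_i \norm{\vect{v}_i}_\infty$, follows from swapping a max with a sum: for each coordinate $j$, $\sum_i (\vect{v}_i)_j \le \sum_i \max_k (\vect{v}_i)_k = \sum_i \norm{\vect{v}_i}_\infty$, and taking the max over $j$ on the left preserves the inequality. The second, $\sum_i \norm{\vect{v}_i}_\infty \le d\cdot \bigl\lVert \sum_i \vect{v}_i\bigr\rVert_\infty$, uses that a single coordinate is bounded by the sum of all $d$ coordinates: $\norm{\vect{v}_i}_\infty \le \sum_{j=1}^d (\vect{v}_i)_j$. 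Summing over $i$, swapping the order of summation, and then bounding each inner sum $\sum_i (\vect{v}_i)_j$ by its maximum over $j$ (losing a factor of $d$ for the $d$ outer terms) gives $\sum_i \norm{\vect{v}_i}_\infty \le \sum_{j=1}^d \sum_i (\vect{v}_i)_j \le d\cdot \max_j \sum_i (\vect{v}_i)_j = d\cdot \bigl\lVert \sum_i \vect{v}_i\bigr\rVert_\infty$.

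The only subtlety worth flagging is that both bounds in (ii) rely essentially on nonnegativity of the coordinates: the first inequality would still hold as a triangle inequality in general, but the second would fail without the sign condition (cancellation could make $\lVert\sum_i \vect{v}_i\rVert_\infty$ arbitrarily small relative to $\sum_i \norm{\vect{v}_i}_\infty$). Since the hypothesis $\vect{v}_i\in\mathbb{R}_{\ge 0}^d$ is given, this is automatic, and no further work is needed.
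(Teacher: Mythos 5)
Your proposal is correct and follows essentially the same route as the paper: the first inequality of (ii) by comparing coordinate-wise sums against maxima, and the second by bounding $\norm{\vect{v}_i}_\infty \le \sum_{j=1}^d (\vect{v}_i)_j$, swapping the order of summation, and losing a factor of $d$ when replacing the sum over coordinates by $d$ times the maximum. Your remark that nonnegativity is essential for the second inequality is a worthwhile observation, though the paper does not dwell on it.
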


\subsection{Problem Definition} We now formally define the online MinUsageTime Dynamic Vector Bin Packing (DVBP) problem.
\paragraph{\bf Problem Instance.}  Let $d\in \mathbb{N}$ denote the number of resource dimensions, i.e., CPU, memory, I/O, etc. We let $\R$ denote the list of items. Each item $r\in \R$ is specified by a tuple $(a(r),e(r),\size(r))$, where $a(r), e(r)\in\mathbb{Q}_{\ge 0}$ and $\size(r)$  denote the arrival time, departure time, and the size of the item, respectively. Note that each item has multi-dimensional resource demands, i.e., $\size(r) \in \mathbb{R}_{\ge 0}^d$ where $\size(r)_j$ denotes the size of the item in the $j^{th}$ dimension, for $j\in[d]$. Without loss of generality, we assume that bins have unit capacity in each dimension, i.e., the size of a bin is $\vect{1}^d$ and that $\size(r)\in[0,1]^d$ for each $r\in\R$ by normalization. Further, let $\size(\R) = \sum_{r\in\R} \size(r)$.

For an item $r\in\R$, let $I(r) = [a(r), e(r))$ denote the \textit{active interval}\footnote{For technical reasons $I(r)$ is half open, i.e., the item $r$ has departed at time $e(r)$.} of item $r$, and we say that item $r$ is \textit{active} in the interval $I(r)$. Let $\ell(I(r)) = e(r) - a(r)$ denote the length of interval $I(r)$, i.e., the \textit{duration} of item $r$. W.l.o.g, we assume $\min_{r\in \R} \ell(I(r)) = 1$, and define $\mu := \max_{r\in \R} \ell(I(r))$. Thus, $\mu$ denotes the ratio of the max/min item durations. Finally, let $\span(\R) = \ell(\cup_{r\in\R} I(r))$ denote the total length of time for which at least one item of $\R$ is active.

\paragraph{\bf Problem Objective.} We focus on the non-clairvoyant setting without recourse. This means that an online algorithm must pack an item immediately into a single bin when it arrives, and that the algorithm cannot repack items. Moreover, when an item arrives the algorithm does not have any knowledge of when it will depart. Let $P_{\A,\R}$ denote the \textit{packing} of the items $\R$ by the algorithm $\A$. Let $B_1,\dots,B_m$ be the bins opened by $\A$, and let $R_i$ be the items placed on bin $B_i$. We assume the cost of using a bin for an interval $I$ equals its length $\ell(I)$. Then the cost of the packing $P_{\A,\R}$ is defined as the total usage time of all the bins, i.e.,
\begin{equation}\label{eq:usage-time}
    \cost(\A,\R) = \sum_{i=1}^m \span(R_i).
\end{equation}


\noindent With this problem objective, our goal is to compute a packing of $\R$ that minimizes the above cost.  

An empty bin is \textit{opened} the first time it receives an item, and remains \textit{open} as long as it contains an active item. When an open bin becomes empty, i.e., all items packed in it depart, we say that it is \textit{closed}. We can assume that once a bin is closed, it is never opened again, i.e., no item is packed in it again. This assumption is justified because bins are indistinguishable, and an idle bin has zero cost. Thus, a bin which has two usage periods $[a,b)$ and $[c,d)$ separated by an idle period $[b,c)$ can be replaced by two bins active between $[a,b)$ and $[c,d)$ respectively, without any change in the cost. Thus we can assume that the usage period of each bin is a single interval. Likewise, we assume that $\cup_{r\in \R} I(r)$ equals the single interval $[0, \span(\R))$, otherwise we can consider each interval of $\cup_{r\in \R} I(r)$ as a separate sub-problem.

\subsection{Any Fit Packing Algorithms}
We now discuss the Any Fit family of algorithms, which are adaptations of standard bin packing algorithms to the DVBP problem. An Any Fit packing algorithm maintains a list $L$ of open bins, and does not open a new bin upon the arrival of an item $r$, if $r$ can be packed into an open bin in $L$. Its pseudocode is given in Algorithm~\ref{alg:anyfit}. 

Different Any Fit packing algorithms differ in how an open bin $b\in L$ is selected to accommodate an item $r$ (Line 4), and how the list $L$ is modified (Lines 9 and 12). In this work, we focus on the following four Any Fit packing algorithms:

\begin{itemize}[leftmargin=*]
\item \textit{Move To Front}. Bins in the list $L$ are maintained in order of most-recent usage. Thus, when an item $r$ arrives, $r$ is placed in the bin $b$ which appears earliest in $L$ and can accommodate $r$, else a new bin $b$ is opened. Immediately $b$ is moved to the front of the list $L$ as it is the most recently used bin.
\item \textit{First Fit}. Bins in the list $L$ are maintained in increasing order of opening time. Thus, an item $r$ is placed in the earliest open bin that can hold $r$.
\item \textit{Next Fit}. At any given time, $|L| = 1$, i.e., at each time Next Fit maintains one open bin in $L$ as a designated \textit{current} bin. When an item $r$ does not fit into the current bin, the current bin is \textit{released} and a new bin is opened to pack $r$ and is made the current bin. 
\item \textit{Best Fit}. An item $r$ is placed in the ``most-loaded" bin. When $d=1$, the load of a bin containing a set of items $R$ is simply $\size(R)$. For $d\ge 2$, there is no unique way of computing the load $w(R)$ of set $R$ from the load vector $\size(R)$. A few options are:
\begin{itemize}
    \item Max load, i.e., $w(R) = \norm{\size(R)}_\infty$,
    \item Sum of loads, i.e., $w(R) = \norm{\size(R)}_1$,
    \item $L_p$-norm of the load, i.e., $w(R) = \norm{\size(R)}_p$, for $p\ge 2$.
\end{itemize}
\end{itemize}

\begin{algorithm}[t]\caption{Any Fit Packing Algorithm}\label{alg:anyfit}
\begin{algorithmic}[1]
\State $L \gets \{b\}$, where $b$ is an empty open bin \Comment{List of open bins}
\Repeat{\bf :}
\If{an item $r$ arrives}
\If{$r$ fits in an open bin of $L$} 
\State Choose a bin $b\in L$ that can hold $r$
\Else 
\State Open a new bin $b$ and add $b$ to $L$
\EndIf
\State Pack item $r$ in bin $b$
\State Modify $L$ as needed
\EndIf
\If{an item $r$ departs}
\State Remove closed bins from $L$
\State Modify $L$ as needed
\EndIf
\Until
\end{algorithmic}
\end{algorithm}

\paragraph{Competitive Ratio.} We measure the performance of an online algorithm $\A$ by its \textit{competitive ratio}, i.e., the worst-case ratio between the cost of the packing produced by algorithm $\A$ and the cost of the packing produced by the optimal offline algorithm which can repack items \cite{competitiveratio}. For a list of items $\R$, denote the optimal, offline cost by $\opt(\R)$. An algorithm $\A$ is said to be $\alpha$-competitive (for $\alpha \ge 1$) if for all item lists $\R$, we have $\cost(\A,\R) \le \alpha\cdot\opt(\R)$. Naturally, we desire algorithms where $\alpha $ is as small as possible.

\subsection{Lower bounds on the Optimum Cost} To analyze the competitive ratio of online algorithms, it is useful to place lower bounds on the optimum cost. To this end, let $\size(\R,t) = \sum_{r\in\R:t\in I(r)} \size(r)$ denote the total size of items that are active at time $t$. Let $\opt(\R,t)$ denote the number of bins the optimal offline algorithm has open at time $t$, equivalently, it is the smallest number of bins into which all items active at time $t$ can be repacked. Then:
\begin{equation}\label{eq:opt-objective}
    \opt(\R) = \int_{\min_{r\in\R} a(r)}^{\max_{r\in\R} e(r)} \opt(\R,t) dt.
\end{equation}

The following lemma presents $d$-dimensional generalizations of lower bounds on OPT introduced in earlier works \cite{LiTangCai2016,RenTangLiCai2017}.
\begin{lemma}\label{lem:lb-opt} The following are lower bounds on $\opt(\R)$.
\begin{enumerate}
\item[(i)] $\opt(\R) \ge \int_{\min_{r\in\R} a(r)}^{\max_{r\in\R} e(r)} \lceil\norm{\size(\R,t)}_\infty\rceil \: dt$
\item[(ii)] $\opt(\R) \ge \frac{1}{d} \sum_{r\in R} \norm{\size(r)}_\infty \cdot \ell(I(r))$
\item[(iii)] $\opt(\R) \ge \span(\R)$
\end{enumerate}
\end{lemma}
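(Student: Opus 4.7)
The common starting point for all three bounds is equation~\eqref{eq:opt-objective}, which expresses $\opt(\R)$ as the integral over time of $\opt(\R,t)$, the minimum number of unit bins needed to repack the items active at time $t$. The plan is to derive a pointwise lower bound on $\opt(\R,t)$ for each statement and then integrate.

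For (i), I would observe that since any bin has unit capacity in every dimension, the items active at time $t$ must fit into $\opt(\R,t)$ bins, which forces $\opt(\R,t) \cdot \vect{1}^d \ge \size(\R,t)$ coordinatewise; taking the maximum coordinate yields $\opt(\R,t) \ge \norm{\size(\R,t)}_\infty$. Since $\opt(\R,t)$ is a nonnegative integer, it is at least $\lceil \norm{\size(\R,t)}_\infty \rceil$, and integrating over $[\min_r a(r), \max_r e(r)]$ gives (i).

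For (ii), I would chain together the weaker form of (i) (dropping the ceiling) with the second part of Proposition~\ref{prop:norm}. At each time $t$, $\size(\R,t) = \sum_{r : t \in I(r)} \size(r)$, so
\[
\norm{\size(\R,t)}_\infty \;\ge\; \frac{1}{d}\sum_{r : t \in I(r)} \norm{\size(r)}_\infty .
\]
Integrating both sides over the span and swapping sum and integral (each term $\norm{\size(r)}_\infty$ contributes exactly over the interval $I(r)$, of length $\ell(I(r))$) yields
\[
\opt(\R) \;\ge\; \int \norm{\size(\R,t)}_\infty\, dt \;\ge\; \frac{1}{d}\sum_{r \in \R} \norm{\size(r)}_\infty \cdot \ell(I(r)),
\]
which is (ii).

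For (iii), I would use the trivial pointwise bound $\opt(\R,t) \ge 1$ whenever at least one item is active at time $t$, and $\opt(\R,t) \ge 0$ otherwise. Integrating gives $\opt(\R) \ge \ell(\cup_{r \in \R} I(r)) = \span(\R)$. None of these arguments has a real obstacle; the only care needed is the use of the $L_\infty$ inequality from Proposition~\ref{prop:norm}(ii) in step (ii), which is exactly the place where the $d$-dimensional generalization loses the factor of $1/d$ compared to the one-dimensional bound in prior work.
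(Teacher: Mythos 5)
Your proposal is correct and follows essentially the same route as the paper: a pointwise bound $\opt(\R,t)\ge\lceil\norm{\size(\R,t)}_\infty\rceil$ integrated via \eqref{eq:opt-objective} for (i), chaining the unceilinged form of (i) with Proposition~\ref{prop:norm}(ii) and swapping sum and integral for (ii), and the trivial bound $\opt(\R,t)\ge 1$ on the active span for (iii). No substantive differences to report.
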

\begin{proof}
The definition of $\size(\R,t)$ and the size of bins being $\vect{1}^d$ implies that any algorithm needs at least $\lceil \size(\R,t)_j \rceil$ bins to pack the total load on the $j^{th}$ dimension, for any $j\in [d]$. Thus, $\opt(\R,t) \ge \max_{j\in [d]} \lceil \size(\R,t)_j \rceil = \lceil\norm{\size(\R,t)}_\infty\rceil$. Using \eqref{eq:opt-objective}, we obtain (i). 

Define the \textit{time-space utilization} of an item $r$ as $u(r) = \norm{\size(r)}_\infty\cdot \ell(I(r))$. The following shows that the total time-space utilization of all items is a lower bound on $d\cdot \opt$, thus proving (ii).
\begin{equation*}
\begin{aligned}
\opt(\R) &\ge \int_{\min_{r\in\R} a(r)}^{\max_{r\in\R} e(r)} \norm{\size(\R,t)}_\infty \: dt & \text{(using (i))}\\
&\ge \int_{\min_{r\in\R} a(r)}^{\max_{r\in\R} e(r)} \bigg\lVert{\sum_{r: t\in I(r)}\size(r)}\bigg\rVert_\infty dt & \text{(def of $\size(\R,t)$)}\\
&\ge \frac{1}{d} \int_{\min_{r\in\R} a(r)}^{\max_{r\in\R} e(r)}\sum_{r: t\in I(r)} \norm{\size(r)}_\infty \: dt & \text{(using Prop~\ref{prop:norm})}\\
&= \frac{1}{d}\sum_{r\in\R} \norm{\size(r)}_\infty\cdot \ell(I(r)). &\text{(swap order)}
\end{aligned}
\end{equation*}

Lastly, observe that since at least one bin is needed for each time $t$ that an item is active, we have $\opt(\R,t)\ge 1$ for each time instant $t\in [0,\span(\R))$. Together with \eqref{eq:opt-objective}, this implies (iii). 

Note that the lower bound (i) is tighter than both (ii) and (iii).
\end{proof}

\section{Upper Bound on the Competitive Ratio of Move To Front}\label{sec:mtf}
In this section we prove the first main result of our paper. 
\begin{theorem}\label{thm:ub-mtf}
The competitive ratio of Move To Front for the MinUsageTime Dynamic Vector Bin packing problem in $d$-dimensions is at most $(2\mu+1)d +1$. 
\end{theorem}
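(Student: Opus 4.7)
The plan is to bound $\cost(\text{MTF}, \R) \le ((2\mu+1)d + 1)\opt(\R)$ by decomposing the usage period of each MTF bin into two classes of intervals and then separately bounding the cumulative length of each class using the lower bounds on $\opt$ from Lemma~\ref{lem:lb-opt}. Enumerate the bins opened by MTF as $B_1, \ldots, B_m$ in order of opening time, with $B_i$'s usage period being $[s_i, e_i)$, so that the total cost equals $\sum_i (e_i - s_i) = \int_0^{\span(\R)} m(t)\, dt$ where $m(t)$ is the number of MTF bins open at time $t$. I would decompose each $[s_i, e_i)$ into (F) the maximal subintervals during which $B_i$ is the front (most-recently-used) bin of MTF's list, and (B) the maximal subintervals during which $B_i$ is not the front. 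Since exactly one bin is the front at every $t \in [0, \span(\R))$, the cumulative length of Class F intervals across all bins equals $\span(\R)$, which by Lemma~\ref{lem:lb-opt}(iii) is at most $\opt(\R)$; this accounts for the ``$+1$'' summand in the claimed competitive ratio.

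The bulk of the work will be to bound the total length of Class B intervals by $(2\mu+1)\,d \cdot \opt(\R)$. For this I would analyze the mechanism by which $B_i$ gets pushed back: at the start of a Class B interval of $B_i$, some incoming item is placed in another bin (either existing or freshly opened), and this item does not fit into $B_i$, so $B_i$'s current load plus the new item exceeds $1$ in some coordinate $j \in [d]$. Aggregating these ``non-fit'' events coordinate-by-coordinate, one can charge each moment $t$ inside a Class B interval of $B_i$ to the $L_\infty$-load $\norm{\size(\R,t)}_\infty$ at some nearby time, paying a factor of $d$ from Proposition~\ref{prop:norm}(ii) (to pass between sums of coordinate loads and the $L_\infty$ norm) and a factor of $2\mu$ from the amortization window: the item responsible for pushing $B_i$ back, and the items in $B_i$ contributing to the load that made it not fit, each live for at most $\mu$ time units and may span the push-back event on either side. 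Integrating using Lemma~\ref{lem:lb-opt}(i) then yields the Class B bound, and summing with the Class F bound gives the theorem.

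The principal obstacle is showing that the Class B charge incurs a factor of only $2\mu+1$ rather than a larger constant. In the 1D case, the original analysis of Kamali--L\'opez-Ortiz accrues a factor of roughly $6$, giving the looser bound $6\mu + 7$; the novelty here is to choose the decomposition so that each Class B interval is charged against an OPT lower bound only a bounded number of times, exploiting the observation that once $B_i$ leaves the front, it contributes to the back-class population only as long as at least one of its items --- whose duration is at most $\mu$ --- is still alive to witness the non-fit event responsible for the push-back. The extension from $d=1$ to general $d$ is then carried by the uniform coordinate-wise application of Proposition~\ref{prop:norm}, which introduces the factor of $d$ without disturbing the $2\mu+1$ constant in front of it.
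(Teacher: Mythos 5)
Your decomposition is exactly the one the paper uses: partition each bin's usage period into intervals where it is the front of the MTF list and intervals where it is not, observe that the front intervals tile $[0,\span(\R))$ so their total length is $\span(\R)\le\opt(\R)$ (the ``$+1$''), and charge the non-front intervals to the non-fit event that starts each of them. Up to that point the proposal is correct and complete. The gap is in the part you yourself flag as ``the bulk of the work'': the bound of $(2\mu+1)d\cdot\opt(\R)$ on the non-front intervals is asserted via a charging scheme --- charge each moment of a non-front interval to $\norm{\size(\R,t)}_\infty$ at a ``nearby time,'' pay $d$ from Proposition~\ref{prop:norm} and $2\mu$ from an ``amortization window,'' integrate with Lemma~\ref{lem:lb-opt}(i) --- that is not actually carried out and, as stated, does not obviously yield the constant. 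If you charge per time instant against Lemma~\ref{lem:lb-opt}(i), you must control how many push-back events can charge the same instant $t$; many bins can sit in non-front intervals simultaneously, and nothing in the sketch bounds that multiplicity. A ``window of $\mu$ on either side of the push-back event'' argument risks either double-counting or a constant larger than $2\mu+1$ --- indeed this kind of windowing is essentially what produces the $6\mu+7$ of Kamali--L\'opez-Ortiz.

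The accounting that actually delivers $(2\mu+1)d$ goes through the per-item bound Lemma~\ref{lem:lb-opt}(ii), not the per-time bound (i). Writing $Q_{i,j}$ for a non-front interval of $B_i$, $r_{i,j}$ for the item whose placement elsewhere ended the preceding front interval, and $R_{i,j}$ for the items resident in $B_i$ at that moment, the non-fit condition gives $\ell(Q_{i,j})<\norm{\size(r_{i,j})}_\infty\ell(Q_{i,j})+\norm{\size(R_{i,j})}_\infty\ell(Q_{i,j})$, and the two sums are bounded separately. For the first: $B_i$ receives no new item during $Q_{i,j}$, so $\ell(Q_{i,j})\le\mu$, and each item of $\R$ is the pushing item for at most one interval, so the sum is at most $\mu\sum_{r}\norm{\size(r)}_\infty\le\mu\sum_r\norm{\size(r)}_\infty\ell(I(r))\le\mu d\cdot\opt(\R)$. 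For the second: for a fixed resident item $r$ of $B_i$, the intervals $Q_{i,j}$ with $r\in R_{i,j}$ are consecutive, all but the last are contained in $I(r)$, and the last has length at most $\mu$, so their total length is at most $\ell(I(r))+\mu\le(\mu+1)\ell(I(r))$, giving $(\mu+1)d\cdot\opt(\R)$. Summing yields $(\mu d)+(\mu+1)d+1=(2\mu+1)d+1$. Your proposal names all the right actors (the pushing item, the resident items, the $\mu$ bound on durations, the factor $d$ from the norm inequality) but without this split and the two per-item counting arguments the constant is not established.
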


For $d=1$, our result implies that Move To Front has a competitive ratio of at most $2\mu+2$. This significantly improves the result of Kamali and L\'{o}pez-Ortiz~\cite{KamaliLopezOrtiz2015}, who showed that Move To Front has an \textit{asymptotic} competitive ratio of $6\mu + 7$, i.e., for any item list $\R$, they showed $\cost(\text{MF},\R) \le (6\mu+7)\cdot\opt(\R) + 3(\mu+1)$. Our result also nearly settles their conjecture of the CR being $2\mu+1$. Their analysis decomposes the active span into segments of length $(\mu+1)$ and compares the cost of OPT with the cost of Move To Front in each such interval. It turns out that this decomposition is sub-optimal. Instead, we directly use the nature of the Move To Front algorithm and develop a novel decomposition of the usage periods of each bin $B$ into intervals based on whether or not in the interval $B$ is the most recently used bin. We now prove Theorem~\ref{thm:ub-mtf}.

Suppose Move To Front uses $m$ bins $B_1,B_2,\dots,B_m$ on an input sequence $\R$. As mentioned earlier, we can assume that $\cup_{r\in\R} I(r) = [0,\span(\R))$ and that the usage period of each bin is an interval For $i\in[m]$, let $I_i = \span(R_i)$ denote the usage period/active interval of bin $B_i$, where $R_i$ is the set of items packed in $B_i$. The cost of Move To Front (MF) can be expressed as $\cost(MF,\R) = \sum_{i=1}^m \ell(I_i)$.

Recall that Move To Front maintains a list $L$ of open bins in the order of their most-recent usage. We say a bin is a \textit{leader} at time $t$ if it is in the front of the list $L$ at time $t$. We call an interval $I$ a \textit{leading interval} for bin $B$ if $B$ is a leader at every time instant in $I$. If Move To Front packs an item into a bin $B$, then $B$ is immediately made the leader. Thus, if a bin $B$ is not a leader at time $t$, then it cannot accept a new item at $t$. Based on the above definition, we partition the active interval of each bin $B$ into intervals which alternate between leading intervals for $B$ and non-leading intervals for $B$. Clearly the time at which a bin is opened begins a leading period for the bin. Thus, for each $i\in[m]$, the interval $I_i$ is sequentially partitioned into $2n_i$ (half-open) intervals as $I_i = P_{i,1} \cup Q_{i,1} \cup P_{i,2} \cup Q_{i,2} \cup \cdots \cup P_{i,n_i} \cup Q_{i,n_i}$, where each $P_{i,j}$ is a leading interval for bin $B_i$ and $Q_{i,j}$ is a non-leading interval for $j\in[n_i]$. Since empty intervals have zero cost, we can assume that all intervals except perhaps the last non-leading intervals of each bin are non-empty, i.e., perhaps $Q_{i,n_i} =\emptyset$. This decomposition is illustrated in Figure~\ref{fig:mtf} with red/thick lines representing leading intervals and blue/thin lines representing non-leading intervals. Using this decomposition, one can write the cost as:
\begin{equation}\label{eq:mtf-cost}
\cost(MF,\R) = \sum_{i=1}^m \sum_{j=1}^{n_i} \bigg(\ell(P_{i,j}) + \ell(Q_{i,j})\bigg).
\end{equation}

\begin{figure}[t]
\centering
\includegraphics[scale=0.5]{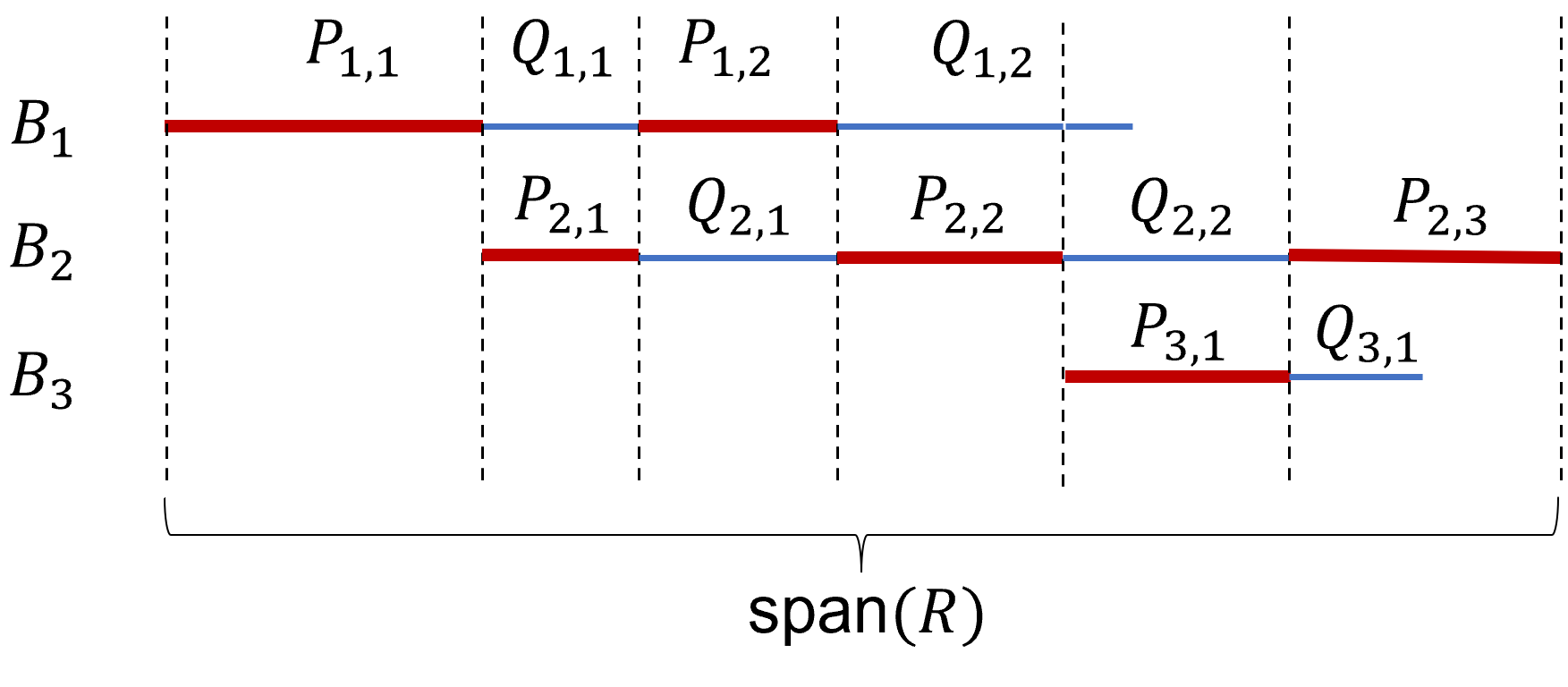}
\caption{\normalfont Shows the usage periods of 3 bins used by Move To Front decomposed into leading (red/thick intervals) and non-leading intervals (blue/thin intervals). The span is also indicated.}
\label{fig:mtf}
\end{figure}

\noindent We analyze the two summands of \eqref{eq:mtf-cost} separately. First we show:

\begin{claim}\label{clm:P-int}
$\displaystyle \sum_{i=1}^m \sum_{j=1}^{n_i} \ell(P_{i,j}) \le \opt(\R).$
\end{claim}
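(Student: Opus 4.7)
The plan is to exploit the fact that at every moment during which at least one bin is open, exactly one bin is the leader (the front of the list $L$), so that the collection of leading intervals across all bins forms a partition of the total active span.

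First I would argue that the leading intervals are pairwise disjoint: by definition of the list $L$, the ``front of $L$'' at any fixed time $t$ is a single bin, so no two bins can both be leaders at the same instant, and a fortiori two leading intervals belonging to different bins (or to the same bin at different epochs of leadership) cannot overlap. Second, I would argue that their union covers $[0,\span(\R))$: at every $t\in[0,\span(\R))$ there is an active item, hence at least one open bin, hence a well-defined leader, and that leader's current leading interval contains $t$. Combining these two observations,
\[
\sum_{i=1}^m \sum_{j=1}^{n_i} \ell(P_{i,j}) \;=\; \ell\!\left(\bigcup_{i,j} P_{i,j}\right) \;=\; \span(\R).
\]

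Finally I would invoke Lemma~\ref{lem:lb-opt}(iii), which gives $\span(\R)\le\opt(\R)$, to conclude the desired inequality.

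The only mild subtlety, which is not really an obstacle, is to make sure the boundary bookkeeping is consistent: the intervals $P_{i,j}$ and $Q_{i,j}$ are half-open by the earlier convention in the paper, and a new leading interval begins exactly when an item is packed into a bin (either when it is opened or when Move To Front moves it to the front); so the ``leader'' function of time is well-defined and piecewise constant on $[0,\span(\R))$, confirming that the $P_{i,j}$'s form an exact partition with no measure-zero overlaps to worry about. No new ideas beyond the definition of ``leader'' and Lemma~\ref{lem:lb-opt}(iii) are needed.
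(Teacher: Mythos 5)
Your proposal is correct and follows exactly the paper's argument: leading intervals are pairwise disjoint since only one bin is the leader at any time, their union covers $[0,\span(\R))$, so the sum of their lengths equals $\span(\R)$, and Lemma~\ref{lem:lb-opt}(iii) finishes the claim. The extra remark about half-open boundary bookkeeping is a harmless elaboration, not a new idea.
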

\begin{proof}
At each time $t$, exactly one bin is the leader, hence the leading intervals of bins $B_i$ and $B_{i'}$ are disjoint, i.e, $P_{i,j} \cap P_{i',j'} = \emptyset$ for any $i,i'\in [m]$,  $j\in[n_i]$, and $j'\in[n_{i'}]$. Since at each time $t\in[0,\span(\R))$ some bin is the leader, one can immediately observe that all the leading intervals partition the interval $[0,\span(\R))$ (see Figure~\ref{fig:mtf}). Combined with Lemma~\ref{lem:lb-opt} (iii), we arrive at Claim~\ref{clm:P-int}:
\[
\sum_{i=1}^m \sum_{j=1}^{n_i} \ell(P_{i,j}) = \span(\R) \le \opt(\R). \qedhere
\]
\end{proof}

We now analyze $\sum_{i=1}^m \sum_{j=1}^{n_i} \ell(Q_{i,j})$. For some $i\in[m]$ and $j\in[n_i]$, consider a non-leading interval $Q_{i,j}$ beginning at time $t_{i,j}$, which is preceded by a leading interval $P_{i,j}$ which ends at $t_{i,j}$. The reason that bin $B_i$ ceased to be a leader at time $t_{i,j}$ is because some other bin $B_{i'}$ received a new item $r_{i,j}$ and became the leader at time $t_{i,j}$. Thus, the algorithm was unable to pack item $r_{i,j}$ in bin $B_i$, the previous leader. Let $R_{i,j} \subseteq R_i$ be the set of items active in bin $i$ at the start of the interval $Q_{i,j}$, i.e. at time $t_{i,j}$. Then it must be mean that for some dimension $k\in[d]$, $(\size(r_{i,j})+\size(R_{i,j}))_k > 1$, or equivalently $\norm{\size(r_{i,j}) + \size(R_{i,j})}_\infty > 1$. Together with Proposition~\ref{prop:norm}, we obtain:

\begin{equation}\label{eq:mtf-rR}
\begin{aligned}
&\sum_{i=1}^m \sum_{j=1}^{n_i} \ell(Q_{i,j}) < \sum_{i=1}^m \sum_{j=1}^{n_i} \norm{\size(r_{i,j}) + \size(R_{i,j})}_\infty \cdot \ell(Q_{i,j})\\
&\le \sum_{i=1}^m \sum_{j=1}^{n_i} \norm{\size(r_{i,j})}_\infty\cdot \ell(Q_{i,j}) + \sum_{i=1}^m \sum_{j=1}^{n_i}\norm{\size(R_{i,j})}_\infty\cdot \ell(Q_{i,j}),
\end{aligned}
\end{equation}
We analyze the two summands of \eqref{eq:mtf-rR} separately. First we show:
\begin{claim}\label{clm:Q-int1}
$\displaystyle \sum_{i=1}^m \sum_{j=1}^{n_i} \norm{\size(r_{i,j})}_\infty\cdot \ell(Q_{i,j}) \le \mu\cdot d \cdot \opt(\R).$
\end{claim}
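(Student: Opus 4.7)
The plan is to combine three ingredients: an upper bound $\ell(Q_{i,j}) \le \mu$ on each non-leading interval, the observation that the items $r_{i,j}$ are pairwise distinct, and the second lower bound in Lemma~\ref{lem:lb-opt}.

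First I would establish the key geometric fact that $\ell(Q_{i,j}) \le \mu$ for every non-leading interval. The crucial observation is that throughout $Q_{i,j}$, the bin $B_i$ receives no new items: in Move To Front, any bin receiving an item is immediately moved to the front of $L$ and becomes the leader, so if $B_i$ were to receive an item during $Q_{i,j}$ it would cease to be a non-leader at that moment. Hence the only items present in $B_i$ during $Q_{i,j}$ are those in $R_{i,j}$, the set of items already active in $B_i$ at the start time $t_{i,j}$. The bin $B_i$ remains open during $Q_{i,j}$ only as long as at least one item of $R_{i,j}$ is still active. For any $r \in R_{i,j}$, we have $a(r) \le t_{i,j}$ and $\ell(I(r)) \le \mu$, so $e(r) = a(r) + \ell(I(r)) \le t_{i,j} + \mu$. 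Taking the maximum over $r\in R_{i,j}$ (or using the item whose presence at the endpoint of $Q_{i,j}$ keeps $B_i$ open, in the case that $Q_{i,j}$ ends because $B_i$ receives a new item) gives $\ell(Q_{i,j}) \le \mu$.

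Second I would argue that the items $\{r_{i,j}\}$ indexed over all valid $(i,j)$ are pairwise distinct elements of $\R$. By definition $r_{i,j}$ is the item whose arrival at time $t_{i,j}$ caused the leader to change from $B_i$ to some other bin. Any single arriving item triggers at most one such leader change, since before its arrival exactly one bin is the leader and after its arrival at most one new bin is. Hence distinct pairs $(i,j)$ yield distinct items.

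Putting these together with Lemma~\ref{lem:lb-opt}(ii): since $\ell(I(r)) \ge 1$ for every $r\in\R$ (by the normalization $\min_r \ell(I(r)) = 1$), we get
\[
\sum_{r\in\R}\norm{\size(r)}_\infty \;\le\; \sum_{r\in\R}\norm{\size(r)}_\infty\cdot\ell(I(r)) \;\le\; d\cdot \opt(\R).
\]
Combining the step 1 bound $\ell(Q_{i,j})\le\mu$ with the distinctness from step 2, we then obtain
\[
\sum_{i=1}^m\sum_{j=1}^{n_i}\norm{\size(r_{i,j})}_\infty\cdot\ell(Q_{i,j})
\;\le\; \mu\sum_{i,j}\norm{\size(r_{i,j})}_\infty
\;\le\; \mu\sum_{r\in\R}\norm{\size(r)}_\infty
\;\le\; \mu\cdot d\cdot\opt(\R),
\]
which is the claimed inequality.

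The main obstacle is Step~1: one must rule out the scenario in which $B_i$ somehow stays alive well beyond $t_{i,j}+\mu$ during a non-leading stretch. The subtlety is that a non-leading interval can end either because $B_i$ closes or because $B_i$ later accepts a new item (regaining leadership); both cases have to be handled, and in both the argument rests on the fact that \emph{no} item can be added to $B_i$ while it is a non-leader, which pins the bin's lifetime to the residual lifetimes of the items already in $R_{i,j}$. Once this is cleanly argued, the remaining steps are essentially direct applications of Lemma~\ref{lem:lb-opt}(ii) and a counting argument.
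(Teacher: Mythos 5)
Your proof is correct and follows essentially the same route as the paper's: bound each $\ell(Q_{i,j})$ by $\mu$ using the fact that a non-leading bin receives no new items, observe that the items $r_{i,j}$ are pairwise distinct, insert the factor $\ell(I(r_{i,j}))\ge 1$, and apply Lemma~\ref{lem:lb-opt}(ii). The only difference is that you spell out the justification of $\ell(Q_{i,j})\le\mu$ in more detail than the paper's one-line remark.
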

\begin{proof}
Observe that since no new item is packed in a bin $B_i$ during a non-leading interval $Q_{i,j}$, we have $\ell(Q_{i,j}) \le \mu$, since each item has a duration of at most $\mu$. Moreover, the items $r_{i,j}$ are distinct, since each $r_{i,j}$ is uniquely associated with the interval $Q_{i,j}$. Using these observations, we obtain the claim as follows:
\begin{equation*}\label{eq:mtf-r}
\begin{aligned}
&\sum_{i=1}^m \sum_{j=1}^{n_i} \norm{\size(r_{i,j})}_\infty \cdot \ell(Q_{i,j}) \le \sum_{i=1}^m \sum_{j=1}^{n_i} \norm{\size(r_{i,j})}_\infty \cdot \mu \\
&\le \mu \cdot \bigg(\sum_{i=1}^m \sum_{j=1}^{n_i} \norm{\size(r_{i,j})}_\infty \cdot \ell(I(r_{i,j}))\bigg) \quad\:(\text{since } \ell(I(r)) \ge 1) \\
&\le \mu\cdot \bigg(\sum_{r\in\R} \norm{\size(r)}_\infty\cdot \ell(I(r))\bigg) \\
&\le \mu\cdot d \cdot \opt(\R). \quad\qquad\qquad\qquad\qquad \text{(using Lem.~\ref{lem:lb-opt} (ii))} 
\end{aligned} 
\end{equation*}
\end{proof}

\noindent We next analyze the second summand of \eqref{eq:mtf-rR}.

\begin{claim}\label{clm:Q-int2}
$\displaystyle \sum_{i=1}^m \sum_{j=1}^{n_i} \norm{\size(R_{i,j})}_\infty \cdot \ell(Q_{i,j}) \le (\mu+1)\cdot d\cdot \opt(\R)$.
\end{claim}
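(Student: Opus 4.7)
The plan is to first expand $\norm{\size(R_{i,j})}_\infty$ using Proposition~\ref{prop:norm}(ii) as $\sum_{r \in R_{i,j}} \norm{\size(r)}_\infty$, and then swap the order of summation so that the outer variable is an item $r \in \R$. Since each item lives in a unique bin---call it $B_{i(r)}$---the pairs $(i,j)$ with $r \in R_{i,j}$ are exactly those with $i = i(r)$ and $t_{i(r),j} \in [a(r), e(r))$ (i.e., $r$ is still active at the start of the non-leading interval $Q_{i(r),j}$). The task thereby reduces to bounding, for each $r$, the quantity $\sum_{j:\, r \in R_{i(r),j}} \ell(Q_{i(r),j})$.

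The key geometric observation is that this sum is at most $\ell(I(r)) + \mu$. Enumerate the relevant indices as $j_1 < j_2 < \cdots < j_k$. Since distinct non-leading intervals of a bin are separated by at least one (positive-length) leading interval, $Q_{i(r),j_p}$ must end no later than $t_{i(r),j_{p+1}}$, which lies in $[a(r), e(r))$. Hence for every $p < k$ the interval $Q_{i(r),j_p}$ is entirely contained in $[a(r), e(r))$, and by disjointness the first $k-1$ of them have total length at most $e(r) - a(r) = \ell(I(r))$. For the last interval $Q_{i(r),j_k}$, the defining property of Move To Front is that no new item can be placed in a non-leading bin, so throughout $Q_{i(r),j_k}$ the bin's contents are a subset of $R_{i(r),j_k}$; since the bin remains active, some item of $R_{i(r),j_k}$ must still be present, and that item has duration at most $\mu$, forcing $\ell(Q_{i(r),j_k}) \le \mu$.

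To finish, combine the two bounds and sum over $r$. The left-hand side of the claim is at most
\[
\sum_{r \in \R} \norm{\size(r)}_\infty \cdot \bigl(\ell(I(r)) + \mu\bigr) \le (\mu+1) \sum_{r \in \R} \norm{\size(r)}_\infty \cdot \ell(I(r)),
\]
where I use $\ell(I(r)) \ge 1$ to absorb $\mu$ into $\mu \cdot \ell(I(r))$. By Lemma~\ref{lem:lb-opt}(ii) the right-hand side is at most $(\mu+1)\, d \cdot \opt(\R)$, yielding the claim.

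The main obstacle is establishing the bound $\ell(Q_{i(r),j_k}) \le \mu$ cleanly: it rests on the structural property of Move To Front that during a bin's non-leading period the bin accepts no new items, so its possible remaining lifetime is controlled by the items it already holds at the start of the non-leading interval. Once this is pinned down, the remainder of the proof is a swap of summation followed by the routine appeals to Proposition~\ref{prop:norm} and Lemma~\ref{lem:lb-opt}(ii).
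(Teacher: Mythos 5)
Your proposal is correct and follows essentially the same route as the paper: expand $\norm{\size(R_{i,j})}_\infty$ via Proposition~\ref{prop:norm}, swap the summation order, bound the per-item sum of non-leading interval lengths by $\ell(I(r))+\mu$ (all but the last such interval lying inside $I(r)$, the last bounded by $\mu$ since a non-leading bin accepts no new items), absorb via $\ell(I(r))\ge 1$, and finish with Lemma~\ref{lem:lb-opt}(ii). The only difference is cosmetic: you enumerate the relevant indices without assuming they form a contiguous block, which is a slightly more careful rendering of the same argument.
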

\begin{proof}
Using Proposition~\ref{prop:norm}, observe the following:
\begin{equation}\label{eq:mtf-R}
\begin{aligned}
&\sum_{i=1}^m \sum_{j=1}^{n_i} \norm{\size(R_{i,j})}_\infty \cdot \ell(Q_{i,j}) \le \sum_{i=1}^m \sum_{j=1}^{n_i} \sum_{r\in R_{i,j}}\norm{\size(r)}_\infty \cdot \ell(Q_{i,j}) \\
&= \sum_{i=1}^m \sum_{r\in R_i} \norm{\size(r)}_\infty \cdot \bigg(\sum_{j\in[n_i]: r\in R_{i,j}} \ell(Q_{i,j})\bigg),
\end{aligned}
\end{equation}
where the last equality follows by changing the order of summation.

Consider an item $r\in R_i$, and let $j^-, j^+ \in [n_i]$ be such that $r\in R_{i,j}$ for all $j\in[j^-,j^+]$, i.e., item $r$ is active during the intervals $Q_{i,j^-},\dots,Q_{i,j^+}$. Since $r$ is active during the start of each interval $Q_{i,j}$ for each $j\in[j^-,j^+]$, we have $\cup_{j=j^-}^{j^+-1} \ell(Q_{i,j}) \subseteq I(r)$. This implies: 
\begin{equation}\label{eq:mtf-clm3}
\sum_{j\in[n_i]: r\in R_{i,j}} \ell(Q_{i,j}) = \sum_{j=j^-}^{j^+-1} \ell(Q_{i,j}) + \ell(Q_{i,j^+}) \le \ell(I(r)) + \mu, 
\end{equation}
where we used $\ell(Q_{i,j^+})\le \mu$. Using the above in eq.~\eqref{eq:mtf-R}, we obtain:
\begin{equation*}\label{eq:mtf-R2}
\begin{aligned}
&\sum_{i=1}^m \sum_{j=1}^{n_i} \norm{\size(R_{i,j})}_\infty \cdot \ell(Q_{i,j}) \\
&\le \sum_{i=1}^m \sum_{r\in R_i} \norm{\size(r)}_\infty \cdot \bigg(\sum_{j\in[n_i]: r\in R_{i,j}} \ell(Q_{i,j})\bigg) &\text{(from \eqref{eq:mtf-R})}\\
&\le \sum_{i=1}^m \sum_{r\in R_i} \norm{\size(r)}_\infty \cdot (\ell(I(r)) + \mu) &\text{(from \eqref{eq:mtf-clm3})}\\
&\le (\mu+1)\cdot\bigg(\sum_{r\in \R} \norm{\size(r)}_\infty \cdot \ell(I(r)))\bigg) &\text{(since $\ell(I(r))\ge 1$)}\\
&\le (\mu+1)\cdot d\cdot \opt(\R), & \text{(using Lem.~\ref{lem:lb-opt} (ii))}
\end{aligned}
\end{equation*}
thus proving Claim~\ref{clm:Q-int2}.
\end{proof}
\noindent Claims~\ref{clm:P-int},\ref{clm:Q-int1} and \ref{clm:Q-int2} together with equations~\eqref{eq:mtf-cost} and \eqref{eq:mtf-rR} imply:
\[\cost(MF,\R) \le ((2\mu+1) d + 1)\cdot \opt(\R), \]
thus proving Theorem~\ref{thm:ub-mtf}.

\section{Upper Bound on the Competitive Ratio of First Fit}\label{sec:ff}
In this section, we prove an upper bound on the CR of First Fit.
\begin{theorem}\label{thm:ub-ff}
The competitive ratio of First Fit for the MinUsageTime Dynamic Vector Bin packing problem is at most $(\mu+2)d+1$.
\end{theorem}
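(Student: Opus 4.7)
The plan is to follow the template of the Move To Front analysis of Theorem~\ref{thm:ub-mtf}, exploiting the structural simplicity of First Fit. Let $B_1,\ldots,B_m$ be the bins opened by First Fit in order, with opening times $a_1 < \cdots < a_m$, closing times $c_i$, and active intervals $I_i = [a_i, c_i)$.

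I would begin by decomposing each $I_i$ into a \emph{pioneer interval} $P_i := [a_i, \min(a_{i+1}, c_i))$ (adopting the convention $a_{m+1} := c_m$), during which $B_i$ is the highest-indexed open bin, and a \emph{post-pioneer interval} $Q_i := [\min(a_{i+1}, c_i), c_i)$, which is non-empty only if $B_i$ is still open when $B_{i+1}$ opens. The crucial structural feature---and the main source of savings over Move To Front---is that each First Fit bin has at most \emph{one} post-pioneer interval: once $B_{i+1}$ is opened, no later event can make $B_i$ the highest-indexed open bin again. Since the pioneer intervals tile $[0, \span(\R))$, Lemma~\ref{lem:lb-opt}(iii) yields $\sum_i \ell(P_i) = \span(\R) \le \opt(\R)$, contributing the ``$+1$'' to the claimed bound.

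For $\sum_i \ell(Q_i)$, the failure of the triggering item $r_{i+1}$ to fit in $B_i$ at time $a_{i+1}$ yields $\norm{\size(r_{i+1}) + \size(R_i(a_{i+1}))}_\infty > 1$, so by Proposition~\ref{prop:norm},
\[
\ell(Q_i) < \ell(Q_i)\cdot\norm{\size(r_{i+1})}_\infty + \ell(Q_i)\cdot\norm{\size(R_i(a_{i+1}))}_\infty.
\]
I would then bound the two summands separately to show $\sum_i \ell(Q_i) \le (\mu+2)d\,\opt(\R)$, thereby obtaining $\cost(FF,\R) \le ((\mu+2)d + 1)\,\opt(\R)$. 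Both bounds rely on Lemma~\ref{lem:lb-opt}(ii): for the first summand, the items $r_{i+1}$ are distinct, reducing the sum to a weighted sum over items of $\R$; for the second, expanding $\norm{\size(R_i(a_{i+1}))}_\infty \le \sum_{r \in R_i(a_{i+1})}\norm{\size(r)}_\infty$ via Proposition~\ref{prop:norm} and swapping the order of summation gives a sum over $r \in \R$ in which each item appears in at most one $R_i(a_{i+1})$, since every item lies in a unique bin.

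The main obstacle is that, in contrast to Move To Front, First Fit does \emph{not} satisfy $\ell(Q_i) \le \mu$: new items can continue to arrive at $B_i$ even after $B_{i+1}$ has opened (whenever $B_1,\ldots,B_{i-1}$ are too full but $B_i$ still has capacity), extending $B_i$'s life potentially well beyond $\mu$. This invalidates the direct analogue of Claims~\ref{clm:Q-int1} and~\ref{clm:Q-int2}, so $\ell(Q_i)$ must instead be controlled via the duration of the latest-departing item in $B_i$ together with a $\mu$ slack. Engineering the charging to realize the sharper multiplier $(\mu+2)$, rather than the $(2\mu+1)$ one would obtain from a direct MTF-style analysis, is the key technical point; the saving comes precisely from the fact that each item lies in exactly one First Fit bin and hence cannot be double-counted across post-pioneer periods.
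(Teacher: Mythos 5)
There is a genuine gap, and it sits exactly where you flag it: your decomposition leaves $\ell(Q_i)$ uncontrolled, and no single-witness charging can control it. You attach to each bin $B_i$ exactly one ``failure event'' (the arrival of the item $r_{i+1}$ that opened $B_{i+1}$, at time $a_{i+1}$), and then try to pay for all of $Q_i = [\min(a_{i+1},c_i), c_i)$ using the sizes of $r_{i+1}$ and of the items resident in $B_i$ at that one instant. But as you yourself observe, First Fit keeps feeding items into $B_i$ long after $a_{i+1}$, so $\ell(Q_i)$ can be arbitrarily large while $\norm{\size(r_{i+1})}_\infty\cdot\ell(I(r_{i+1}))$ and $\sum_{r\in R_i(a_{i+1})}\norm{\size(r)}_\infty\cdot\ell(I(r))$ stay bounded: the items alive late in $Q_i$ need bear no relation to the witness at $a_{i+1}$. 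Saying the length ``must instead be controlled via the duration of the latest-departing item together with a $\mu$ slack'' does not repair this, because the latest-departing item carries no load lower bound --- it may be tiny and may have been accepted by $B_i$ without any earlier bin rejecting it. The proof cannot be completed from this decomposition with one witness per bin. (A smaller issue: your pioneer intervals $[a_i,\min(a_{i+1},c_i))$ are disjoint but do \emph{not} tile $[0,\span(\R))$ --- after $B_{i+1},\dots$ close, $B_i$ can again be the highest-indexed open bin at times outside $P_i$ --- though the one-sided inequality $\sum_i\ell(P_i)\le\span(\R)$ still holds.)

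The paper's proof (following \cite{RenTangLiCai2017}) inverts the roles and multiplies the witnesses. It sets $t_i=\max(I_i^-,\max_{j<i}I_j^+)$ and charges the \emph{tail} $[\min(I_i^+,t_i),I_i^+)$ --- the part of $B_i$'s life after all earlier bins have closed --- to the span (these tails are disjoint), while the \emph{head} $P_i=[I_i^-,\min(I_i^+,t_i))$ is the part paid for by blocking arguments. Crucially, the head is covered by an inclusion-wise minimal set $R_i'$ of items of $B_i$, inducing a partition $P_i=\bigcup_j P_{i,j}$ with $P_{i,j}\subseteq I(r_{i,j})$; each covering item $r_{i,j}$ arrives while some earlier bin (its blocking bin) is open and rejects it, yielding a separate inequality $\norm{\size(r_{i,j})+\size(R_{i,j})}_\infty>1$ for each sub-interval. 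Thus every instant being charged lies inside the active interval of an item with its own fresh witness, which is what lets $\ell(P_{i,j})\le\ell(I(r_{i,j}))$ give the $d\cdot\opt(\R)$ term, and a disjointness argument for intervals sharing a blocking item gives the $(\mu+1)d\cdot\opt(\R)$ term. If you want to salvage your write-up, you should adopt this head/tail split by $t_i$ and the minimal-cover device; the intuition you cite (each item lies in one bin, so no double counting) is indeed used in the paper, but only inside Claim~\ref{clm:ff-2}'s disjointness argument, not as a substitute for having a witness per covering item.
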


Let $B_1,\dots, B_m$ be the bins used by First Fit (FF) to pack an item sequence $\R$. For $i\in[m]$, let $R_i\subseteq \R$ be the items packed in bin $B_i$, and let $I_i = [I_i^-,I_i^+)$ denote the active interval of $B_i$. We assume bins are indexed according to their opening times, i.e., $I_1^- \le \dots \le I_m^-$. 

\begin{figure}[t]
\centering
\includegraphics[scale=0.5]{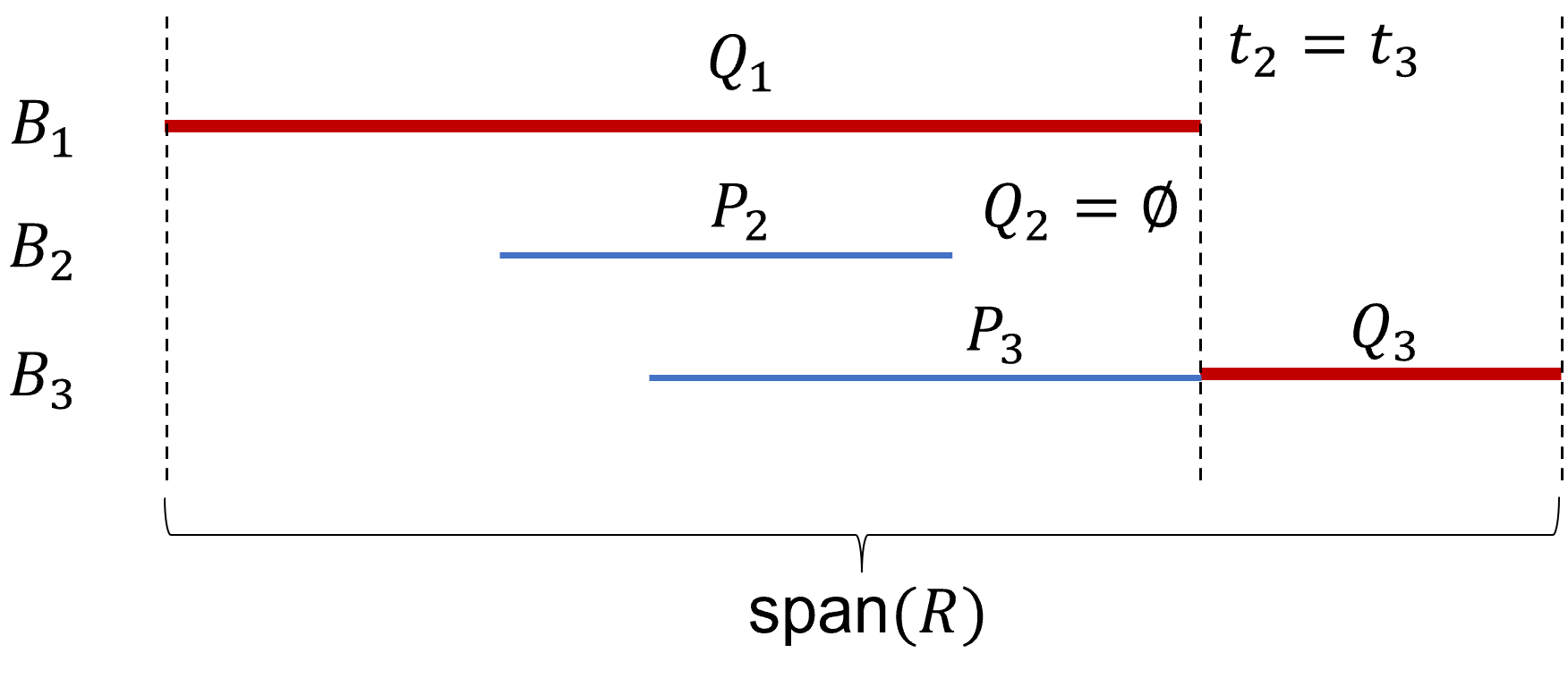}
\caption{\normalfont Shows the decomposition of the usage periods of 3 bins used by First Fit. The span is also indicated.}
\label{fig:ff}
\end{figure}

The cost of the First Fit packing is given by $\cost(FF, \R) = \sum_{i=1}^m \ell(I_i)$. Following the ideas of \cite{RenTangLiCai2017}, we decompose each interval $I_i$ as follows. Let $t_i$ denote the latest closing time of bins opened before $B_i$, i.e., $t_i = \max(I_i^-, \max_{j<i} I_j^+)$. Then we partition $I_i = P_i \cup Q_i$, where $P_i = [I_i^-, \min(I_i^+, t_i))$ and $Q_i = [\min(I_i^+, t_i), I_i^+)$. Note that $P_1 = \emptyset$. The decomposition is illustrated in Figure~\ref{fig:ff}. Therefore, the cost of the packing is:

\begin{equation}\label{eq:ff-cost-0}
\cost(FF, \R) = \sum_{i=2}^m \ell(P_i) + \sum_{i=1}^m \ell(Q_i).
\end{equation}

\begin{claim}\label{clm:ff-0}
$\sum_{i=1}^m \ell(Q_i) = \span(\R) \le \opt(\R).$
\end{claim}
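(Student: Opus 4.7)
The plan is to show that the intervals $Q_1, \ldots, Q_m$ form a partition of $[0,\span(\R))$; the claim then follows immediately, because the total length equals $\span(\R)$, and Lemma~\ref{lem:lb-opt}(iii) gives $\span(\R) \le \opt(\R)$.

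As a preliminary observation, I would note that $\cup_{i=1}^m I_i = [0,\span(\R))$. Each bin is open precisely while it contains an active item, so $I_i = \cup_{r\in R_i} I(r)$, and since every item is assigned to some bin, $\cup_i I_i = \cup_{r\in\R} I(r) = [0,\span(\R))$ by the paper's standing assumption.

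Next I would prove pairwise disjointness. Suppose $t \in Q_i \cap Q_j$ for indices $i < j$. From $t \in Q_i \subseteq I_i$ we get $t < I_i^+$. On the other hand, $t \in Q_j$ forces $t \ge t_j$, and by the definition of $t_j$ we have $t_j \ge \max_{k<j} I_k^+ \ge I_i^+$, whence $t \ge I_i^+$, a contradiction. Hence the $Q_i$'s are pairwise disjoint.

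For coverage, fix $t \in [0,\span(\R))$ and set $i^\star = \min\{i : t \in I_i\}$, which is well-defined by the preliminary observation. For any $j < i^\star$, the minimality of $i^\star$ gives $t \notin I_j$; since bins are indexed in order of opening time, $I_j^- \le I_{i^\star}^- \le t$, and therefore the only way $t \notin I_j$ is $t \ge I_j^+$. Consequently $t \ge \max\bigl(I_{i^\star}^-,\, \max_{j<i^\star} I_j^+\bigr) = t_{i^\star}$, and combined with $t < I_{i^\star}^+$ this gives $t \in [t_{i^\star}, I_{i^\star}^+) = Q_{i^\star}$. Putting the three steps together yields $\sum_i \ell(Q_i) = \span(\R) \le \opt(\R)$.

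The only point I expect to require some care is the choice of $i^\star$ in the coverage argument: picking the maximum index of a bin active at $t$ does \emph{not} work (it fails on nested intervals such as $[0,10), [2,8), [5,15)$ at $t=6$), and it is essential to pick the minimum, since that index is precisely the one whose bin has seen all earlier bins close.
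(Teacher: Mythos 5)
Your proof is correct and takes the same route the paper intends: the paper's one-line proof simply asserts that, by construction, the intervals $Q_i$ tile $[0,\span(\R))$ and then invokes Lemma~\ref{lem:lb-opt}(iii). You have supplied the disjointness and coverage verifications that the paper leaves implicit (including the correct choice of the \emph{minimum} active bin index in the coverage step), so this is a fleshed-out version of the same argument rather than a different one.
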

\begin{proof}
The claim follows directly from the definition of the decomposition (see Fig.~\ref{fig:ff}) and Lemma~\ref{lem:lb-opt} (iii).
\end{proof}

Let us now define $R'_i \subseteq R_i$ to be an inclusion-wise minimal cover of the interval $P_i$. That is, the union of active intervals of items in $R'_i$ covers $P_i$, but any $J\subset R'_i$ does not cover $P_i$. Let $r_{i,1}, \dots, r_{i,n_i}$ be the $n_i$ items in $R'_i$, sorted by their arrival time. By the minimality of $R'_i$, each item in $R'_i$ has a distinct arrival time. Thus we can index the items so that $a(r_{i,1}) < a(r_{i,2}) < \dots <  a(r_{i,n_i})$.
Moreover, the minimality of $R'_i$ also implies that ending times of the items are in sorted order, i.e., $e(r_{i,1}) < e(r_{i,2}) < \dots <  e(r_{i,n_i})$; if not, an item can be removed from $R'_i$ while still covering $P_i$, thus contradicting the minimality of $R'_i$. 

We now decompose each non-empty interval $P_i$ into $n_i$ disjoint periods $P_i = P_{i,1} \cup \dots \cup P_{i,n_i}$ where $P_{i,j} = [a(r_{i,j}), a(r_{i,j+1}))$ for $1\le j < n_i$ and $P_{i,n_i} = [a(r_{i,n_i}), \min(I_i^+,t_i))$. Since this is a partition of $P_i$, we have $\ell(P_i) = \sum_{j=1}^{n_i} \ell(P_{i,j})$ for each $i\ge 2$.

For an item $r_{i,j} \in R'_i$, we refer to the largest index bin with index less than $i$ which is open at time $a(r_{i,j})$ as the \textit{blocking bin}\footnote{\cite{RenTangLiCai2017} use the terminology supplier bin instead} $B(i,j)$ for the item $r_{i,j}$ and the interval $P_{i,j}$. Note that since an item $r_{i,j}$ is placed in bin $B_i$, all previously opened bins including the blocking bin $B(i,j)$ could not pack $r_{i,j}$ when it arrived. Thus:
\begin{equation*}
    \norm{\size(r_{i,j}) + \size(R_{i,j})}_\infty > 1,
\end{equation*}
where $R_{i,j}$ is the set of items in $B(i,j)$ that are active at time $a(r_{i,j})$.
Using this, we have:
\begin{equation}\label{eq:ff-cost-2}
\begin{aligned}
&\sum_{i=2}^m \sum_{j=1}^{n_i} \ell(P_{i,j}) < \sum_{i=2}^m \sum_{j=1}^{n_i} \norm{\size(r_{i,j}) + \size(R_{i,j})}_\infty \cdot \ell(P_{i,j})\\
&\le \sum_{i=2}^m \sum_{j=1}^{n_i} \norm{\size(r_{i,j})}_\infty\cdot \ell(P_{i,j}) + \sum_{i=2}^m \sum_{j=1}^{n_i}\norm{\size(R_{i,j})}_\infty\cdot \ell(P_{i,j}),
\end{aligned}
\end{equation}

\noindent We analyze the summands of \eqref{eq:ff-cost-2} separately. We first have:
\begin{claim}\label{clm:ff-1}
$\sum_{i=2}^m \sum_{j=1}^{n_i} \norm{\size(r_{i,j})}_\infty\cdot \ell(P_{i,j}) \le d\cdot \opt(\R).$
\end{claim}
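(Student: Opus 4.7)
The plan is to bound each $\ell(P_{i,j})$ by $\ell(I(r_{i,j}))$, which reduces the double sum to $\sum_r \norm{\size(r)}_\infty \cdot \ell(I(r))$ taken over a distinct subset of items of $\R$, and then invoke Lemma~\ref{lem:lb-opt}(ii) directly.

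The key structural step, which I expect to be the main obstacle, is showing that each item $r_{i,j} \in R'_i$ is active throughout its associated subinterval $P_{i,j}$, i.e., $P_{i,j} \subseteq I(r_{i,j})$. For $j < n_i$ we have $P_{i,j} = [a(r_{i,j}), a(r_{i,j+1}))$, and I claim $a(r_{i,j+1}) \le e(r_{i,j})$: otherwise the gap $(e(r_{i,j}), a(r_{i,j+1}))$ would lie inside $P_i$ yet fail to be covered by any $r_{i,k}$, since items with smaller index have ending time at most $e(r_{i,j})$ (endings are sorted, by minimality of $R'_i$), while items with larger index have arrival time at least $a(r_{i,j+1})$. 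This would contradict the fact that $R'_i$ covers $P_i$. For $j = n_i$, the requirement that $R'_i$ cover $P_i$ up to $\min(I_i^+, t_i)$, combined with $e(r_{i,n_i})$ being the latest ending time in $R'_i$, forces $e(r_{i,n_i}) \ge \min(I_i^+, t_i)$. In both cases $P_{i,j} \subseteq I(r_{i,j})$ and hence $\ell(P_{i,j}) \le \ell(I(r_{i,j}))$.

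Finally, since every item is packed into a unique bin and the arrival times within each bin are distinct, the items $\{r_{i,j}\}_{i,j}$ are pairwise distinct elements of $\R$. Combining with the previous inequality yields
\begin{align*}
\sum_{i=2}^m \sum_{j=1}^{n_i} \norm{\size(r_{i,j})}_\infty \cdot \ell(P_{i,j})
&\le \sum_{i=2}^m \sum_{j=1}^{n_i} \norm{\size(r_{i,j})}_\infty \cdot \ell(I(r_{i,j})) \\
&\le \sum_{r \in \R} \norm{\size(r)}_\infty \cdot \ell(I(r)) \;\le\; d \cdot \opt(\R),
\end{align*}
where the last step applies Lemma~\ref{lem:lb-opt}(ii). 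This establishes Claim~\ref{clm:ff-1}. The argument is a multi-dimensional adaptation of the 1-D analysis in~\cite{RenTangLiCai2017}; the $L_\infty$-norm version of the time-space utilization bound, together with the factor of $d$ from Lemma~\ref{lem:lb-opt}(ii), is what produces the $d \cdot \opt(\R)$ on the right-hand side rather than $\opt(\R)$.
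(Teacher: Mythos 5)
Your proof is correct and follows essentially the same route as the paper: bound $\ell(P_{i,j})$ by $\ell(I(r_{i,j}))$ via the containment $P_{i,j}\subseteq I(r_{i,j})$, use distinctness of the $r_{i,j}$, and apply Lemma~\ref{lem:lb-opt}(ii). The only difference is that you explicitly justify the containment from the minimality of the cover $R'_i$ (via the sortedness of arrival and departure times), whereas the paper asserts it directly from the definition of $P_{i,j}$.
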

\begin{proof}
By definition of $P_{i,j}$, we have $P_{i,j}\subseteq I(r_{i,j})$. Thus, $\ell(P_{i,j}) \le \ell(I(r_{i,j}))$. Lemma \ref{lem:lb-opt} (ii) then proves the claim.
\end{proof}

\noindent The next claim analyzes the second summand of \eqref{eq:ff-cost-2}.
\begin{restatable}{claim}{clmff}\label{clm:ff-2}
$\sum_{i=2}^m \sum_{j=1}^{n_i} \norm{\size(R_{i,j})}_\infty\cdot \ell(P_{i,j}) \le (\mu+1)\cdot d\cdot\opt(\R).$
\end{restatable}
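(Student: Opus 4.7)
The plan is to mirror the strategy of Claim~\ref{clm:Q-int2} in the Move To Front analysis, but with the interval-disjointness argument adapted to the blocking-bin structure of First Fit. First, using Proposition~\ref{prop:norm}(ii), bound $\norm{\size(R_{i,j})}_\infty \le \sum_{r\in R_{i,j}} \norm{\size(r)}_\infty$, then swap the order of summation to rewrite the left-hand side as
\[
\sum_{r\in\R} \norm{\size(r)}_\infty \cdot \Bigg(\sum_{(i,j):\, r\in R_{i,j}} \ell(P_{i,j})\Bigg).
\]
The crux of the proof is then a per-item bound: for each $r\in\R$,
\[
\sum_{(i,j):\, r\in R_{i,j}} \ell(P_{i,j}) \le \ell(I(r)) + \mu.
\]

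The main obstacle is establishing this per-item bound, which I would deduce from two structural observations about First Fit. Fix $r$ and let $B_k$ be the (unique) bin of First Fit containing $r$. Since $r \in R_{i,j}$ forces $r$ to belong to the blocking bin $B(i,j)$ at time $a(r_{i,j})$, and items are never migrated, every pair $(i,j)$ in the inner sum satisfies $B(i,j)=B_k$. The first observation is that all such intervals $P_{i,j}$ are pairwise disjoint: for the same $i$ this is immediate from the partition $P_i = P_{i,1}\cup\dots\cup P_{i,n_i}$, while for $i_1<i_2$ with $B(i_1,j_1)=B(i_2,j_2)=B_k$ the definition of a blocking bin forces $B_{i_1}$ to be closed at time $a(r_{i_2,j_2})$, so $I_{i_1}^+ \le a(r_{i_2,j_2})$, hence $P_{i_1,j_1}\subseteq I_{i_1}$ ends no later than where $P_{i_2,j_2}$ begins. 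The second observation is that each such $P_{i,j}$ starts at $a(r_{i,j})\in I(r)$ (because $r$ is active in $B_k$ at that moment) and has length at most $\mu$, since $P_{i,j}$ is contained in the active interval of the item $r_{i,j}$ that begins it (either bounded by the next item's arrival $a(r_{i,j+1}) \le e(r_{i,j})$, or by $\min(I_i^+,t_i) \le e(r_{i,j})$ using the minimality of $R'_i$ or the bin-closure argument).

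Given these two observations, a union of pairwise disjoint intervals, each of length at most $\mu$, all starting inside $I(r)=[a(r),e(r))$, must be contained in $[a(r), e(r)+\mu)$; hence the total length is at most $\ell(I(r))+\mu$, establishing the per-item bound.

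Plugging this back in and using $\ell(I(r))\ge 1$ to bound $\ell(I(r))+\mu \le (\mu+1)\ell(I(r))$, I would conclude
\[
\sum_{i=2}^m \sum_{j=1}^{n_i} \norm{\size(R_{i,j})}_\infty\cdot \ell(P_{i,j}) \le (\mu+1)\sum_{r\in\R} \norm{\size(r)}_\infty\cdot \ell(I(r)) \le (\mu+1)\cdot d\cdot \opt(\R),
\]
where the final inequality is Lemma~\ref{lem:lb-opt}(ii). This proves Claim~\ref{clm:ff-2}.
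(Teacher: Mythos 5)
Your proposal is correct and follows essentially the same route as the paper's proof: bound the norm of $\size(R_{i,j})$ by the sum of item norms, swap the summation order, prove pairwise disjointness of the intervals $P_{i,j}$ sharing a common blocking-bin item $r$ via the bin-closure argument, and bound their total length by $\ell(I(r))+\mu \le (\mu+1)\ell(I(r))$ before applying Lemma~\ref{lem:lb-opt}(ii). The only cosmetic difference is that the paper packages the per-item bound as $\max_{(i,j)} e(r_{i,j}) - \min_{(i,j)} a(r_{i,j})$ while you argue containment of the disjoint union in $[a(r), e(r)+\mu)$; these yield the identical estimate.
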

\begin{proof}
Let $\hat{R} = \cup_{i=2}^m\cup_{j=1}^{n_i} R_{i,j}$ be the set of all items belonging to bins considered as blocking bins by items in $\{R'_i\}_{i\ge 2}$. We have:
\begin{equation}\label{eq:ff-cost-3}
\begin{aligned}
&\sum_{i=2}^m \sum_{j=1}^{n_i} \norm{\size(R_{i,j})}_\infty \cdot \ell(P_{i,j}) \le \sum_{i=2}^m \sum_{j=1}^{n_i} \sum_{r\in R_{i,j}}\norm{\size(r)}_\infty \cdot \ell(P_{i,j}) \\
&= \sum_{r\in \hat{R}} \norm{\size(r)}_\infty \cdot \bigg(\sum_{(i,j): r\in R_{i,j}} \ell(P_{i,j})\bigg),
\end{aligned}
\end{equation}
where the last inequality follows by changing the order of summation. Now for a fixed $r\in \hat{R}$ which is packed in some bin $B$, consider two distinct items $r_{i,j}$ and $r_{i',j'}$ s.t. $r\in R_{i,j} \cap R_{i',j'}$. We will show that $P_{i,j} \cap P_{i',j'} = \emptyset$. \begin{itemize}[leftmargin=*]
\item For $i=i'$, this follows from the fact that $\{P_{i,j}\}_{j=1}^{n_i}$ partitions $P_i$.
\item For $i\neq i'$, let $i<i'$ w.l.o.g. Then since $r_{i,j}$ and $r_{i',j'}$ have the same blocking bin $B$, it must be the case that when $r_{i',j'}$ arrives, $B_i$ must be closed, otherwise $B_i$ would be the blocking bin for $r_{i',j'}$. Thus, $r_{i,j}$ must have departed when $r_{i',j'}$ arrives, implying that $P_{i,j} \cap P_{i',j'} = \emptyset$.
\end{itemize}

\noindent Thus for a given $r\in\hat{R}$, the set of intervals $P_{i,j}$ s.t. $r\in R_{i,j}$ are pairwise disjoint. Hence we can observe that for each $r\in \hat{R}$:
\begin{equation}\label{eq:ff-time}
\sum_{(i,j): r\in R_{i,j}} \ell(P_{i,j}) \le \max_{(i,j): r\in R_{i,j}} e(r_{i,j})- \min_{(i,j): r\in R_{i,j}} a(r_{i,j}).  
\end{equation}

\noindent Note that since each $r\in\hat{R}$ is active at the arrival time of an item $r_{i,j}$ s.t. $r\in R_{i,j}$, we have $a(r) \le a(r_{i,j}) \le e(r)$. Thus, $e(r_{i,j}) \le \mu + a(r_{i,j}) \le \mu + e(r)$. Putting these in \eqref{eq:ff-time}, we obtain:
\begin{equation*}\label{eq:ff-time-2}
\sum_{(i,j): r\in R_{i,j}} \ell(P_{i,j}) \le \mu 
+ e(r) - a(r) \le \mu + \ell(I(r)) \le (\mu+1)\cdot \ell(I(r)).  
\end{equation*}
Using the above in \eqref{eq:ff-cost-3} with Lemma~\ref{lem:lb-opt} (ii), we see that
\begin{equation*}
\begin{aligned}
&\sum_{i=2}^m \sum_{j=1}^{n_i} \norm{\size(R_{i,j})}_\infty \cdot \ell(P_{i,j}) \le \sum_{r\in \hat{R}} \norm{\size(r)}_\infty \cdot \bigg(\sum_{(i,j): r\in R_{i,j}} \ell(P_{i,j})\bigg) \\
&\le (\mu+1)\cdot\sum_{r\in \hat{R}} \norm{\size(r)}_\infty \cdot \ell(I(r)) \le (\mu+1)\cdot d\cdot \opt(\R),
\end{aligned}
\end{equation*}
thus proving the claim.
\end{proof}

\noindent Claims~\ref{clm:ff-0}, \ref{clm:ff-1} and \ref{clm:ff-2} together with equations \eqref{eq:ff-cost-0} and \eqref{eq:ff-cost-2} imply:
\[\cost(FF,\R) \le ((\mu+2)d+1)\cdot\opt(\R),\]
thus proving Theorem~\ref{thm:ub-ff}.

\section{Upper Bound on the Competitive Ratio of Next Fit}\label{sec:nf}
In this section, we prove an upper bound on the CR of Next Fit.
\begin{theorem}\label{thm:ub-nextfit}
The competitive ratio of Next Fit for the MinUsageTime Dynamic Vector Bin packing problem is at most $2\mu d+1$. 
\end{theorem}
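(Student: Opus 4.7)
The plan is to decompose the usage period of each Next Fit bin into a \textit{current} phase and a \textit{released} phase, mirroring the leading/non-leading decomposition used in the Move To Front analysis. Because Next Fit maintains exactly one current bin at a time, the structure is even simpler.

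Let $B_1, \dots, B_m$ be the bins used by Next Fit (NF), indexed by opening time. For each $i < m$, let $r_i \in \R$ denote the item whose arrival caused $B_{i+1}$ to be opened (i.e., $r_i$ could not fit in $B_i$ and therefore triggered the release of $B_i$). Define
\[
C_i = [I_i^-,\, a(r_i)), \qquad D_i = [a(r_i),\, I_i^+) \qquad \text{for } i < m,
\]
and $C_m = I_m$, $D_m = \emptyset$. Then $I_i = C_i \cup D_i$ and $\cost(NF,\R) = \sum_{i=1}^m \ell(C_i) + \sum_{i=1}^m \ell(D_i)$.

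First I would argue that the current intervals $\{C_i\}$ form a partition of $[0, \span(\R))$: at every time $t$ where some item is active, exactly one bin is the current bin (this is the defining property of Next Fit), and $C_i$ is precisely the maximal interval during which $B_i$ is current. Combined with Lemma~\ref{lem:lb-opt}(iii), this gives $\sum_i \ell(C_i) = \span(\R) \le \opt(\R)$.

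Next, for the released intervals, let $R_i^{\mathrm{rel}} \subseteq R_i$ be the set of items still active in $B_i$ at time $a(r_i)$. Since NF could not place $r_i$ in $B_i$, we have $\lVert \size(r_i) + \size(R_i^{\mathrm{rel}})\rVert_\infty > 1$, so by Proposition~\ref{prop:norm},
\begin{equation*}
\sum_{i=1}^{m-1} \ell(D_i) < \sum_{i=1}^{m-1} \lVert\size(r_i)\rVert_\infty \cdot \ell(D_i) \;+\; \sum_{i=1}^{m-1} \lVert\size(R_i^{\mathrm{rel}})\rVert_\infty \cdot \ell(D_i).
\end{equation*}
The key observation is that no item is packed into $B_i$ during $D_i$, so $\ell(D_i) \le \mu$. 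For the first sum, since $\ell(I(r_i)) \ge 1$, I would bound $\ell(D_i) \le \mu \cdot \ell(I(r_i))$ and note that the $r_i$ are distinct items in $\R$, which via Lemma~\ref{lem:lb-opt}(ii) yields a bound of $\mu d \cdot \opt(\R)$. For the second sum, expand $\lVert\size(R_i^{\mathrm{rel}})\rVert_\infty \le \sum_{r \in R_i^{\mathrm{rel}}} \lVert\size(r)\rVert_\infty$; because $R_i^{\mathrm{rel}} \subseteq R_i$ and the $R_i$ are disjoint, each item $r \in \R$ appears in at most one $R_i^{\mathrm{rel}}$, and for that $i$ we have $\ell(D_i) \le \mu \le \mu \cdot \ell(I(r))$. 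Again applying Lemma~\ref{lem:lb-opt}(ii) gives $\mu d \cdot \opt(\R)$. Adding the three contributions yields $\cost(NF,\R) \le (2\mu d + 1) \cdot \opt(\R)$.

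The only subtlety I anticipate is being precise about the decomposition at the boundary cases (first and last bins) and justifying that $\{C_i\}$ exactly partitions the span; once the Next Fit invariant of a unique current bin is used this is straightforward, so I expect no serious obstacle. The rest of the argument is a direct adaptation of the two-term bound used for Move To Front, but simpler because each bin contributes only a single released interval rather than possibly many non-leading intervals.
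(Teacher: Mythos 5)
Your proposal is correct and follows essentially the same route as the paper: the same decomposition of each bin's usage period into its current phase (which partitions the span) and its released phase (bounded via the overflow inequality $\lVert\size(r_i)+\size(R_i^{\mathrm{rel}})\rVert_\infty>1$, the fact that a released bin receives no new items so $\ell(D_i)\le\mu$, and Lemma~\ref{lem:lb-opt}(ii)). The only cosmetic difference is that you bound the two resulting sums by $\mu d\cdot\opt(\R)$ each, while the paper merges them by noting each item appears at most twice across the summation; the outcome is identical.
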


Let $B_1,\dots, B_m$ be the bins used by Next Fit (NF) to pack an item sequence $\R$. As before, for $i\in[m]$, let $R_i\subseteq \R$ be the items packed in bin $B_i$, and let $I_i$ denote the active interval of $B_i$. We have $\cost(NF, \R) = \sum_{i=1}^m \ell(I_i)$.

Recall that Next Fit maintains one current bin at a time into which it tries to pack incoming items. Following~\cite{KamaliLopezOrtiz2015}, we decompose the usage period $I_i$ of a bin $B_i$ into two intervals $P_i$ and $Q_i$ based on when Next Fit considered $B_i$ as the current bin. We decompose the interval $I_i = [I_i^-, I_i^+)$ as
$I_i = P_i \cup Q_i$, where $P_i = [I_i^-, t_i)$ and $Q_i = [t_i, I_i^+)$, with $t_i\in I_i$ denoting the time at which $B_i$ was released. Thus, $P_i$ is the time period when $B_i$ was considered the current bin and $Q_i$ is the time period when $B_i$ ceased to the current bin. 

Using the above interval-decomposition, we can write the cost as $\cost(NF,\R) = \sum_{i=1}^m \ell(P_i)+ \sum_{i=1}^m\ell(Q_i)$. Note that at each time $t$, exactly one bin is current, hence $P_i \cap P_{i'} = \emptyset$ for all $i\neq i'$. Further at each time some bin is current, hence we conclude that the intervals $\{P_i\}_{i\in[m]}$ partition the interval $[0,\span(\R)]$. Together with Lemma~\ref{lem:lb-opt} (iii), this gives:
\begin{equation}\label{eq:nf-span}
\sum_{i=1}^m \ell(P_i) = \span(\R) \le \opt(\R).
\end{equation}

Next, observe that at a bin $B_i$ was released at time $t_i$ because an item $r_i$ could not be packed into $B_i$. This means: 
\begin{equation}\label{eq:nf-size}
    \norm{\size(R'_i) + \size(r_i)}_\infty > 1,
\end{equation}
where $R'_i\subseteq R_i$ denotes the items packed in $B_i$ which are active at $t_i$. Moreover, since a bin $B_i$ is released at $t_i$, it does not receive any new item in the period $Q_i$. Thus $\ell(Q_i) \le \mu$, for each $i\in[m]$. We use these observations to prove the following.

\begin{equation*}
\begin{aligned}
&\sum_{i=1}^m \ell(Q_i) < \sum_{i=1}^m \norm{\size(R'_i)+\size(r_i)}_\infty \cdot \ell(Q_i) & \text{(using \eqref{eq:nf-size})}\\
&\le \mu\cdot\sum_{i=1}^m \norm{\size(R'_i)+\size(r_i)}_\infty & \text{(since $\ell(Q_i) \le \mu$)}\\
&\le \mu\cdot\sum_{i=1}^m\sum_{r\in R'_i}\norm{\size(r)}_\infty+ \mu\cdot\sum_{i=1}^m\norm{\size(r_i)}_\infty & \text{(using Prop.~\ref{prop:norm})}\\
&\le 2\mu\cdot \sum_{r\in \R} \norm{\size(r)}_\infty \le 2\mu\cdot d\cdot\opt(\R), & \text{(using Lemma~\ref{lem:lb-opt} (ii))}
\end{aligned}
\end{equation*}
where the last transition uses the observation that each item $r\in \R$ can appear in the summation at most twice: once in some $R'_i$ and once as some $r_i$. Together with \eqref{eq:nf-span}, we conclude that $\cost(NF,\R) \le (2\mu d+1)\cdot\opt(\R)$, thus proving Theorem~\ref{thm:ub-nextfit}.

\section{Lower Bounds on the Competitive Ratio of Any Fit Packing Algorithms}\label{sec:lb}

In this section, we prove lower bounds on the competitive ratio of algorithms belonging to the Any Fit packing algorithm family. Our first result applies to all algorithms in this family.

\begin{theorem}\label{thm:lb-anyfit}
The competitive ratio of any Any Fit packing algorithm for the MinUsageTime Dynamic Vector Bin Packing problem in $d$-dimensions is at least $(\mu+1)d$.
\end{theorem}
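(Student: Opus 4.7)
The plan is to construct, for every integer $k \ge 1$, an adversarial input instance $\R_k$ of $\Theta(kd)$ items such that every Any Fit algorithm $\A$ satisfies $\cost(\A, \R_k)/\opt(\R_k) \ge (\mu+1)d - o_k(1)$; taking $k \to \infty$ then yields the bound. The construction combines $kd$ ``long'' items, one for each pair $(i,j) \in [k]\times[d]$, each of size $\eps \cdot \vect{e}_j$ (where $\vect{e}_j$ denotes the $j$-th standard basis vector) and duration $\mu$, with $kd$ carefully designed ``short'' items of duration $1$. The items are scheduled in $d$ temporal phases (one per dimension) spaced slightly more than one time unit apart. Within phase $j$, the $k$ pairs of (short, long) items arrive in sequence, with the short preceding its paired long by exactly one time unit.

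The critical design choice is the size vectors of the short items. They are to be engineered so that (a) within a single phase, any two short items conflict in at least one dimension, forcing a new bin for each; (b) once a (short, long) pair is placed in a bin, the bin's load saturates every dimension used by that phase, so no future item of the same phase fits in it; and (c) each short in phase $j+1$ conflicts with every bin holding any pair from phases $1,\dots,j$. Together these properties ensure that every Any Fit algorithm, regardless of tie-breaking, opens $kd$ distinct bins on $\R_k$, each containing exactly one short and one long item, and each active from its short's arrival to its long's departure---a span of $\mu+1$ time units. Hence $\cost(\A, \R_k) \ge kd(\mu+1) - O(d)$. OPT, on the other hand, packs all $kd$ long items into a single bin (feasible since each dimension's cumulative load is at most $k\eps \le 1$) that is active for at most $\mu + d$ time units, and consolidates the $kd$ shorts into $k$ bins reused sequentially across the $d$ phases---possible because shorts from different phases have disjoint active intervals---each active for at most $d$ time units. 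Hence $\opt(\R_k) \le k + \mu + O(d)$. The ratio $kd(\mu+1)/(k + \mu + O(d))$ then tends to $(\mu+1)d$ as $k \to \infty$.

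The main obstacle is designing the short items' size vectors to satisfy (a)--(c) uniformly against every Any Fit tie-breaking rule, while keeping the total load of the shorts compact enough for OPT to consolidate them across phases. Balancing these opposing demands calls for an asymmetric size profile: a short paired with a long of dimension $j$ should carry substantial load in dimension $j$ (to saturate the bin when combined with the long) and in each earlier dimension $j' < j$ (to make the bin full in those dimensions, blocking phase-$j$ shorts from fitting in phase-$j'$ bins), but only negligible load in later dimensions (to permit OPT to pack shorts from distinct phases sequentially in the same bin). The heart of the proof is an inductive argument on phase number, maintaining an invariant about the load profile of every open bin in every Any Fit packing that implies the forcing properties persist as each new pair arrives.
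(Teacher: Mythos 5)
Your overall template (force $kd$ bins, each active for $\mu+1$ time units, while OPT consolidates) is the right one, and your accounting of the algorithm's cost is plausible, but the construction as specified cannot reach the claimed bound because your own design requirement (a) destroys the upper bound on $\opt$. You require that within a phase any two short items conflict in some dimension, and for the blocking properties (b)--(c) to persist as later shorts arrive, the $k$ shorts of a phase must be simultaneously active (otherwise a bin whose short has departed holds only a long of size $\eps\,\vect{e}_j$ and would accept the next short, breaking the forcing). But then \emph{every} packing, including the optimal offline one, must keep $k$ bins open for the roughly one time unit during which phase $j$'s shorts coexist, since pairwise-conflicting active items cannot share a bin, i.e.\ $\opt(\R_k,t)\ge k$ throughout that period. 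Summing over the $d$ phases gives $\opt(\R_k)\ge kd(1-o(1))$, not the $k+\mu+O(d)$ you claim. (There is also an internal inconsistency: $k$ bins ``each active for at most $d$ time units'' already costs $kd$, not $k$.) The resulting ratio is $kd(\mu+1)/(kd+\mu)\to\mu+1$ as $k\to\infty$, so your instance certifies only the known one-dimensional bound, not $(\mu+1)d$.

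The missing idea is that the conflict forcing the online algorithm to open a new bin must not be a conflict between the expensive items themselves. The paper's construction achieves this asymmetry by interleaving each ``big in dimension $i$'' item (size $1-d\eps$ in coordinate $i$ and $\eps$ elsewhere) with a small ``glue'' item of size $(d\eps-\eps')\cdot\vect{1}^d$: one big plus one glue loads a bin to $1-\eps'$ in dimension $i$ and blocks every subsequent item, so any Any Fit algorithm opens $dk$ bins; yet big items from \emph{different} dimension-groups are mutually compatible, so OPT packs one from each of the $d$ groups into a single bin ($k$ bins in total, each of cost $1$) and all the glue into one extra bin. The long-duration items (all of size $\eps'\cdot\vect{1}^d$, arriving just before the first wave departs) then necessarily land one per open bin for the algorithm, but all join the single glue bin for OPT, giving $\opt\le k+\mu+1$ against $\cost(\A,\R_0\cup\R_1)\ge dk(\mu+1)$. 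To repair your proof you would need to redesign the short items with this algorithm-conflicting but OPT-compatible structure; the duration-$\mu$ items and the phase scheduling are not where the difficulty lies.
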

\begin{proof}
We present the following worst-case adversarial example against which any Any Fit algorithm $\A$ has competitive ratio at least $(\mu+1)d$. Let $k\ge 1$ be a parameter. Let $\eps, \eps' \in (0,1)$ be such that $\eps>\eps'$, $d^2\eps k < 1$, $d\eps > 2\eps'$, and $\eps(1+d)< 1$. 

We first construct a sequence of $n = 2\cdot d\cdot k$ items labelled as $\R_0 = \{1,2,\dots,2dk-1,2dk\}$. To specify the size of the items, we partition them into groups $G_0, G_1, G_2, \dots,$ and $G_d$. Here $G_0 = \{j\in [2dk]: j \text{ is even}\}$ is the set of even-indexed items. The size of an item $j$ in $G_0$ is the following vector:
\[\size(j) = \begin{bmatrix}
d\eps - \eps' && d\eps - \eps' && \cdots && d\eps - \eps'
\end{bmatrix},\]
i.e., a vector equal to $(d\eps - \eps')\cdot \vect{1}^d$.

For $i\in[d]$, the group $G_i$ is the set $\{2m -1: (i-1)\cdot k + 1\le m \le i\cdot k\}$ of $k$ items, i.e., odd-indexed items in the range $[2(i-1)k+1, 2ik]$. The size of an item $j\in G_i$ is the vector with $(1-d\eps)$ in the $i^{th}$ dimension and $\eps$ everywhere else, i.e.,
\[\size(j) = \begin{bmatrix}
\eps && \cdots && (1-d\eps) &&\cdots && \eps
\end{bmatrix}.\]

The items $\R_0 = \{1,2,\dots,2dk\}$ arrive in that order at time 0, and their active interval is $[0,1)$. Consider the execution of an Any Fit packing algorithm $\A$. Items $1$ and $2$ are initially placed into a single bin $B_1$ after which the bin is loaded at $(1-d\eps + d\eps - \eps') = 1-\eps'$ in dimension 1. Now no item $j$ for $j\ge 3$ cannot be packed into $B_1$ since the load in dimension 1 would exceed the capacity as we have $1-\eps' + \eps > 1$ and $1-\eps' + (d\eps-\eps') > 1$. Hence another bin $B_2$ is opened. Continuing in this manner, one can observe that at least $dk$ bins $B_1,B_2,\dots,B_{dk}$ are created, with bins $B_{(i-1)k+1}, \dots, B_{ik}$ being loaded at level $(1-\eps')$ in dimension $i$ and at $(\eps+d\eps-\eps')$ in other dimensions, for $i\in [d]$. Thus, $\cost(\A,\R_0) \ge dk$.

On the other hand, $\opt(\R_0) \le k+1$. This is because all the even-indexed items of $G_0$ can be packed into one bin $B_0$ since $(d\eps-\eps')\cdot (dk) < 1$. The remaining items can be packed into $k$ bins, each of which contains exactly one item from $G_i$, for each $i\in [d]$. This is a feasible packing since the load on the $j^{th}$ dimension of any such bin is $(1-d\eps) + (d-1)\cdot\eps = 1-\eps < 1$.

We now introduce a sequence $\R_1$ of $dk$ identical items, each of which are loaded at $\eps'$ in each dimension. These items arrive just before any items of $\R_0$ depart and their active interval is $[1,\mu+1)$. Consider the execution of $\A$ on $\R_0 \cup \R_1$. As argued earlier, $\A$ opens at least $dk$ bins which are loaded at at most $(1-\eps')$ in each dimension (since $(1+\eps)d < 1$, and exactly at $(1-\eps')$ in one dimension. Thus, each item of $\R_1$ will be packed in a separate bin by $\A$. This is because $\A$ is an Any Fit packing algorithm and will not open a new bin since the $dk$ items of $\R_1$ can be packed in the $dk$ bins created by $\A$ while packing $\R_0$. Subsequently, items in $\R_0$ depart, and each of the $dk$ bins contain one item each from $\R_1$ in the period $[1,\mu+1)$. Thus we have $\cost(\A,\R_0\cup\R_1) \ge dk(1+\mu)$.

On the other hand, the optimal algorithm can pack all items of $\R_1$ into the bin $B_0$ which held all even-indexed items, since $(d\eps-\eps')dk + dk\cdot\eps' = d^2\eps k < 1$. Thus only bin $B_0$ has an usage period of length $\mu+1$ while the remaining $k$ bins have a usage period of length $1$ since they only contain items from $\R_0$. Thus, $\opt(\R_0\cup\R_1) \le k+1+\mu$. Thus the competitive ratio of $\A$ is:
\[CR(\A) \ge \frac{\cost(\A,\R_0\cup\R_1)}{\opt(\R_0\cup\R_1)} \ge \frac{dk(\mu+1)}{k+\mu+1} = \frac{(\mu+1)d}{1+(\mu+1)/k}\]

Since $k$ is an arbitrary parameter, in the limit $k\rightarrow\infty$, we have $CR(\A) \ge (\mu+1)d$ for any Any Fit packing algorithm $\A$, proving the claimed lower bound.
\end{proof}

\begin{figure}
\centering
\includegraphics[scale=0.6]{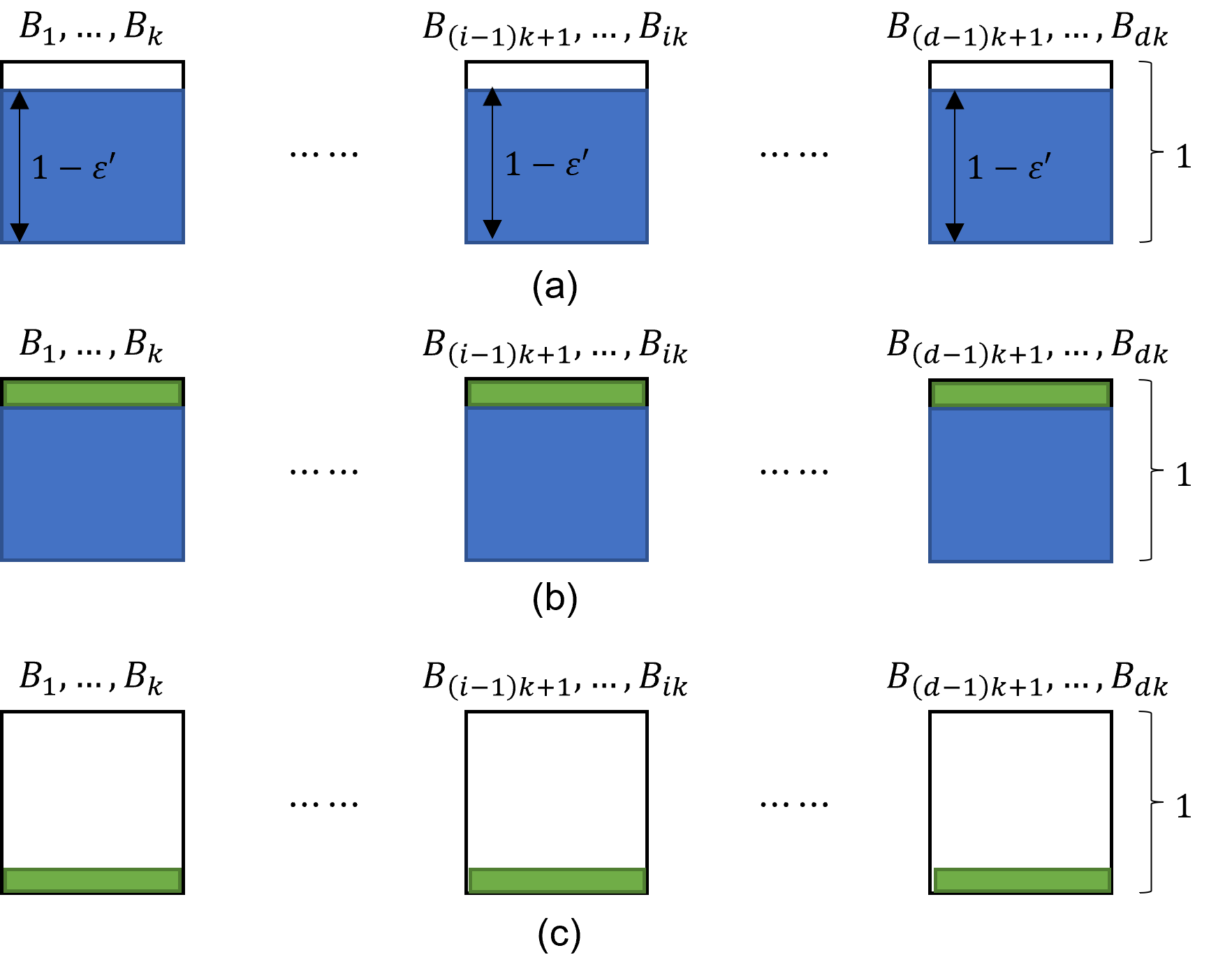}
\caption{\normalfont Illustrates the load on bins used by any Any Fit packing algorithm $\A$ on the item list $\R_0 \cup \R_1$. Part (a) shows $\A$ opens $dk$ bins in the time period $[0,1)$ where bins $B_{(i-1)k+1},\dots,B_{ik}$ have load $1-\eps'$ in dimension $i$. Part (b) shows that $\A$ packs $dk$ items of $\R_1$ at time $1$, fully loading each bin $B_{(i-1)k+1},\dots,B_{ik}$ in dimension $i$. Part (c) shows the time period $[1,\mu+1)$ when items of $\R_0$ have departed and each bin contains one item from $\R_1$.}
\label{fig:lb}
\end{figure}

The execution of any Any Fit packing algorithm $\A$ on the item list $\R_0\cup \R_1$ in illustrated in Figure~\ref{fig:lb}. We now prove a stronger lower bound against Next Fit using a different construction.
\begin{theorem}\label{thm:lb-nextfit}
The competitive ratio of Next Fit for the MinUsageTime Dynamic Vector Bin Packing problem is at least $2\mu d$.
\end{theorem}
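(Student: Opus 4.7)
The plan is to construct an adversarial item sequence exploiting Next Fit's distinctive weakness: it maintains only one open bin at a time and never reuses a bin once released. I will build on the template of Theorem~\ref{thm:lb-anyfit}, but arrange items so that Next Fit is forced to open approximately $2dk$ bins (twice as many as Any Fit is forced to open in that construction), each containing a long-duration small item. Together with an optimum cost of $\Theta(k+\mu)$, this will yield a competitive ratio approaching $2\mu d$.

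For the construction, pick a parameter $k$ and small positive parameters $\eps, \delta$ with $\eps < 1/(2dk)$ and $\delta < \eps$. The adversary uses two phases. In Phase~1 (starting at time $0$), for $j=1,\dots,dk$ it releases in rapid succession a pair $(a_j, b_j)$, where $a_j$ has size $\eps\cdot\vect{1}^d$ and duration $\mu$, and $b_j$ has size $1-2\eps+\delta$ in dimension $i(j)=((j-1)\bmod d)+1$, size $\eps/(2d)$ in every other dimension, and duration $1$. The crucial arithmetic is that $a_j$ and $b_j$ just barely fit together (loading the bin to $1-\eps+\delta$ in dim $i(j)$), but adding $a_{j+1}$ (which contributes another $\eps$ in that dimension) exceeds capacity; this forces a new bin per round. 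In Phase~2 (starting at time $1$, just after all Phase~1 blockers depart), the adversary introduces an analogous sequence of $dk$ pairs $(a'_j, b'_j)$. At the phase boundary the current Next Fit bin holds only $a_{dk}$ (load $\eps\vect{1}^d$), so $a'_1$ joins it, but $b'_1$ then triggers a new bin opening, and the Phase~1 behavior repeats.

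The analysis counts bins and costs. Phase~1 creates $dk$ bins and Phase~2 creates $dk$ more, giving $2dk$ Next Fit bins in total, and each contains at least one $a$-item of duration $\mu$, so each bin contributes cost $\approx \mu$, yielding $\cost(NF,\R)\approx 2dk\mu$. On the OPT side, all $2dk$ small $a$-items fit into a single bin (since $2dk\eps<1$ in each dimension), contributing cost $\mu+1\approx \mu$. The $2dk$ blockers cycle through $d$ types: the load arithmetic shows that $d$ blockers of distinct types fit together in one bin (the dominant-dimension load is $1-2\eps+\delta+(d-1)\eps/(2d)<1$), so OPT uses only $k$ blocker bins per phase, or a total of $O(k)$ short-duration bin-units across both phases. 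Altogether $\opt(\R)=O(k+\mu)$, and the ratio $\cost(NF,\R)/\opt(\R)$ tends to $2\mu d$ as $k\to\infty$ (keeping $\mu$ fixed and then letting $\mu$ grow).

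I expect the main obstacle to be the Phase~1/Phase~2 transition and the tight accounting required to get the factor $2$ rather than merely $1$. The delicate point is that when Phase~2 begins, the current bin's load has been reduced (Phase~1 blockers have departed), so one has to verify that the first Phase~2 item $a'_1$ is absorbed into the existing current bin (without opening a new one), while the following blocker $b'_1$ is the item that actually forces the new bin to open; this is what makes Phase~2 contribute its own $dk$ fresh bins on top of Phase~1's $dk$, rather than merely extending the existing bins. Verifying this step and choosing $\eps,\delta$ consistently so that every inequality in the construction goes the correct way (each $a$-plus-blocker fits, each subsequent $a$ does not, each first-of-phase $a$ is absorbed) is the chief source of technical care in the proof.
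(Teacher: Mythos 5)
There is a genuine quantitative gap: your construction yields a lower bound of only $\mu d$, not $2\mu d$. Trace the arithmetic on the OPT side. Your blockers have size $1-2\eps+\delta$ in their dominant dimension, so no bin --- not even an optimal offline one --- can hold two blockers of the same type; the best OPT can do is pack $d$ blockers of distinct types per bin, as you note. That forces OPT to use at least $k$ blocker bins in Phase~1 and another $k$ in Phase~2 (reusing the same physical bins does not help, since they would then be active for length $2$ instead of $1$), so $\opt(\R) \ge 2k + \mu$ rather than the ``$O(k+\mu)$'' you wrote with an unexamined constant. Meanwhile Next Fit opens $\approx 2dk$ bins of cost $\mu$ each. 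The ratio is therefore
\[
\frac{2dk\mu}{2k+\mu+O(1)} \;\xrightarrow{k\to\infty}\; \mu d,
\]
which is weaker than the $(\mu+1)d$ bound you already get for free from Theorem~\ref{thm:lb-anyfit}. The two-phase doubling is a wash: it doubles Next Fit's bin count and OPT's blocker-bin count by exactly the same factor, and the only term it fails to double ($\mu$, from the single consolidated bin of small items) vanishes from the ratio as $k\to\infty$.

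The missing idea is that the factor $2$ must come from a packing-density gap between OPT and Next Fit on the \emph{blockers}, not from repeating the adversary. The paper's construction makes each blocker have size just under $1/2$ (specifically $\tfrac12 - d\eps$) in its dominant dimension and interleaves it with small items of size $\eps'$ per coordinate where $\eps' > 2d\eps$. The small item's contribution is exactly what pushes a second same-type blocker over capacity for Next Fit ($2(\tfrac12 - d\eps) + \eps' > 1$), so Next Fit still pays one bin per blocker; but OPT, which segregates the small items into their own bin, can pack \emph{two} blockers from each of the $d$ groups into a single bin ($2(\tfrac12-d\eps) + 2(d-1)\eps = 1-2\eps < 1$), using only $k/2$ blocker bins for $dk$ blockers. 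That yields $\cost(NF,\R) \approx dk\mu$ against $\opt(\R) \le \mu + k/2$, and hence the ratio $2\mu d$ in a single phase. You would need to rework your blocker sizes along these lines (and re-verify the Next Fit rejection inequalities with the small-item sizes chosen large enough relative to the blockers' slack) for the argument to reach the claimed bound.
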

\begin{proof}
We present the following worst-case adversarial example against which Next Fit has a competitive ratio at least $2\mu d$. Let $k\ge 2$ be an even integer. Let $\eps,\eps' \in (0,1)$ be such that $\eps'>2d\eps$ and $\eps'dk < 1$. We construct a sequence of $n = 2\cdot d \cdot k$ items labelled as $\R = \{1,2,\dots, 2dk\}$. As before, we partition the items into groups $G_0, G_1,\dots, G_d$, where $G_0 = \{j\in[2dk]: j \text{ is even}\}$ is the set of even-indexed items. The size of an item $j\in G_0$ is $\eps'\cdot\vect{1}^d$, i.e., 
\[\size(j) = \begin{bmatrix}
\eps' && \eps' && \cdots && \eps'
\end{bmatrix}. \]

For $i\in [d]$, the group $G_i$ is the set $\{2m -1: (i-1)\cdot k + 1\le m \le i\cdot k\}$ of $k$ items, i.e., odd-indexed items in the range $[2(i-1)k+1, 2ik]$. The size of an item $j\in G_i$ is the vector with $(1-d\eps)$ in the $i^{th}$ dimension and $\eps$ everywhere else, i.e.,
\[\size(j) = \begin{bmatrix}
\eps && \cdots && (\frac{1}{2}-d\eps) &&\cdots && \eps
\end{bmatrix}.\]
Items $\R= \{1,2,\dots, 2dk\}$ arrive in that order at time 0. The active interval of items in $G_0$ is $[0,\mu)$ and of those in $\cup_{i=1}^d G_i$ is $[0,1)$. 

Consider the execution of Next Fit (NF) on $\R$. Items $1$ and $2$ are placed in one bin $B_1$, which is then loaded at $1/2-d\eps + \eps'$. Item $3$ does not fit in $B_1$ since the load on dimension 1 would then be $2(1/2-d\eps) + \eps' > 1$. Thus NF closes $B_1$ and opens a bin $B_2$ to accommodate item $3$, after which item $4$ is also placed in $B_2$. Continuing this way, one observes that NF creates $k$ bins to pack the items $\{1,2,\dots,2k\}$, with the last bin $B_k$ being loaded at $(1/2-d\eps + \eps')$ in dimension 1 and $(\eps+\eps')$ in other dimensions. Subsequently, items $(2k+1)$ which is loaded at $(1/2-d\eps)$ in dimension 2, and item$(2k+2)$ can also be placed in $B_k$, since $(\eps+\eps'+1/2-d\eps + \eps') < 1$. This loads the bin $B_k$ at $(1/2-d\eps+\eps+2\eps')$ in dimension 2. 
Consequently, item $(2k+3)$ cannot be placed in this bin since it has load $(1/2-d\eps)$ in dimension 2. 
Thus NF closes $B_k$ and opens another bin. It can be seen that for the rest of the items in $\{2k+3,\dots,4k\}$, NF opens $(k-1)$ bins, similar to the execution on items $\{1,\dots, 2k\}$. Continuing this way for $(d-1)$ more phases, with phase $i$ corresponding to packing of the items in $\{2(i-1)k+1, \dots,2ik\}$ for $2\le i\le d$, one can observe that NF opens $k + (k-1) + \cdots + (k-1) = 1+(k-1)d$ bins. Since each bin contains at least one even-indexed item, each bin is active for a duration of $\mu$. Thus, $\cost(NF,\R) \ge (1+(k-1)d)\mu$.

On the other hand, $\opt(\R) \le \mu+k/2$. This is because the optimal algorithm can pack all items of $G_0$ into a single bin, since $\eps'\cdot dk < 1$. This bin will be active in the interval $[0,\mu)$, leading to a cost of $\mu$. The remaining items can be packed into $k/2$ bins, each of which contains exactly two items from $G_i$, for each $i\in [d]$. This is a feasible packing since the load on the $j^{th}$ dimension of any such bin is $2\cdot(1/2-d\eps) + (d-1)\cdot 2\cdot\eps = 1-2\eps < 1$. Each such bin is active for the interval $[0,1)$, contributing a cost of 1. Thus the competitive ratio of Next Fit satisfies:
\[CR(NF) \ge \frac{\cost(NF,\R)}{\opt(\R)} \ge \frac{(1+(k-1)d)\mu}{\mu + k/2} = \frac{2\mu(d + \frac{1}{k-1})}{\frac{k}{k-1} + \frac{2\mu}{k-1}}.\]
Since $k$ is arbitrary, in the limit $k\rightarrow\infty$, we have $CR(NF) \ge 2\mu d$, proving the claimed lower bound.
\end{proof}

Note that our lower bounds (Theorems \ref{thm:lb-anyfit} and \ref{thm:lb-nextfit}) match the lower bound of $(\mu+1)$ for Any Fit packing algorithms \cite{LiTangCai2016,RenTangLiCai2017} and $2\mu$ for Next Fit \cite{TangLiRenCai2016} for the one-dimensional case. However the one-dimensional examples of \cite{LiTangCai2016,TangLiRenCai2016} do not generalize to multiple dimensions, and hence the new constructions of Theorems~\ref{thm:lb-anyfit} and \ref{thm:lb-nextfit} are needed. We also record that the competitive ratio of Best Fit can be unbounded, even for $d=1$, as was shown in \cite{LiTangCai2016}.
\begin{theorem}[\cite{LiTangCai2016}] The competitive ratio of Best Fit for the MinUsageTime Dynamic Vector Bin Packing problem is unbounded.
\end{theorem}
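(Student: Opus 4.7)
My plan is to observe that the one-dimensional instance is a special case of the $d$-dimensional instance, so the theorem reduces immediately to the 1-D case established in \cite{LiTangCai2016}. Concretely, given any 1-D item list with sizes $s_r \in [0,1]$, I would embed it into $d$ dimensions by setting $\size(r) = (s_r, 0, \ldots, 0)$. Under this embedding, only the first coordinate carries nontrivial load, so regardless of whether Best Fit aggregates load via $\|\cdot\|_\infty$, $\|\cdot\|_1$, or any $L_p$-norm, the load $w(R)$ on every bin equals its 1-D load, and Best Fit's choices coincide with its 1-D execution. Similarly, the feasibility constraint in the embedded $d$-D instance reduces to the 1-D capacity constraint, so the optimum cost is unchanged. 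Hence the 1-D lower bound immediately transfers.

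For the 1-D lower bound itself, I would follow the recipe of \cite{LiTangCai2016} and construct, for each parameter $N$, an item sequence $\R_N$ on which $\cost(\text{BF},\R_N)/\opt(\R_N) \to \infty$ as $N \to \infty$. The construction exploits the defining weakness of Best Fit: its preference for the currently most-loaded bin. The idea is to introduce long-duration items of small size and then follow each with carefully sized short-duration ``filler'' items that temporarily tilt the load profile so that Best Fit routes each subsequent long-duration item into a fresh bin rather than consolidating with the previous ones. Once the fillers depart, each of the resulting $\Omega(N)$ bins is left holding only a single long-duration item and must therefore remain active for time $\Theta(\mu)$. Since the long-duration items are individually and collectively small, an offline algorithm can pack them all into a single bin (active for time $\Theta(\mu)$) plus $O(1)$ bins for the short fillers, giving $\opt(\R_N) = O(\mu)$ while $\cost(\text{BF},\R_N) = \Omega(N\mu)$.

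The main technical subtlety, which is the part where one would lean on the detailed construction of \cite{LiTangCai2016}, is choosing the sizes and arrival order with enough precision that the ``most loaded'' tie-breaking provably directs every long-duration item to a new bin at each step, while still leaving an offline packing that concentrates these items together. Once that 1-D construction is in hand, the $d$-dimensional statement follows at once from the embedding above, so no additional multi-dimensional machinery is needed.
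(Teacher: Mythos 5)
Your proposal is correct and matches the paper's treatment: the paper simply imports the unboundedness of Best Fit from the 1-D result of \cite{LiTangCai2016}, implicitly using exactly the observation you make explicit, namely that a 1-D instance embeds as $\size(r)=(s_r,0,\dots,0)$ so that every load notion ($\|\cdot\|_\infty$, $\|\cdot\|_1$, $L_p$), the feasibility constraints, and hence Best Fit's execution and $\opt$ all coincide with the 1-D case. Your sketch of the underlying 1-D adversarial construction is consistent with the cited work, and no further multi-dimensional argument is needed.
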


Lastly, we examine the competitive ratio of Move To Front.
\begin{restatable}{theorem}{lbmtf}\label{thm:lb-mtf}
The competitive ratio of the Move To Front algorithm for the MinUsageTime Dynamic Vector Bin Packing problem is at least $\max\{2\mu, (\mu+1)d\}$.
\end{restatable}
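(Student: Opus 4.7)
My plan is to prove the two halves of the maximum separately. The bound $(\mu+1)d$ is immediate: Move To Front is an Any Fit packing algorithm, so the lower bound from Theorem~\ref{thm:lb-anyfit} applies directly. All the new work lies in establishing the $2\mu$ bound, which is the dominant one precisely when $d=1$ and $\mu>1$. I would prove it via a one-dimensional adversarial instance tailored to MTF's MRU dynamics, borrowing the structure of the Next Fit construction from Theorem~\ref{thm:lb-nextfit} but verifying that it defeats MTF as well, despite MTF's ability to probe bins past the most-recently-used one.

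The adversary fixes an even integer $k\ge 2$ and constants $\eps,\eps'\in(0,1)$ satisfying $\eps'>2\eps$ and $k\eps'<1$ (for instance, $\eps=1/(5k)$ and $\eps'=1/(2k)$). The input $\R$ is a sequence of $2k$ items that all arrive at time $0$ in the order $1,2,\dots,2k$: each odd-indexed item has size $1/2-\eps$ and duration $1$, while each even-indexed item has size $\eps'$ and duration $\mu$.

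The heart of the argument is tracing MTF on $\R$. Item $1$ opens bin $1$, and item $2$ fits into bin $1$ (the MRU bin), raising its load to $1/2-\eps+\eps'$. Item $3$ cannot fit into bin $1$ because $(1/2-\eps+\eps')+(1/2-\eps)=1-2\eps+\eps'>1$ by the choice of constants, so MTF opens bin $2$; item $4$ then fits in bin $2$ (MRU). When item $5$ arrives, MTF first probes bin $2$ and then bin $1$, but both bins carry the same load $1/2-\eps+\eps'$ and neither accepts item $5$, so bin $3$ opens. Proceeding inductively, each consecutive (odd, even) pair occupies a fresh bin that becomes inaccessible to every later odd item, so MTF ends with exactly $k$ bins, each kept alive for the full duration $\mu$ of its even item; therefore $\cost(\text{MF},\R)=k\mu$. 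The optimum, in contrast, can consolidate all $k$ even items (of total size $k\eps'<1$) into a single bin of duration $\mu$, and pair up the odd items into $k/2$ bins (each of total size $1-2\eps<1$) of duration $1$, giving $\opt(\R)\le \mu+k/2$. Letting $k\to\infty$, the competitive ratio satisfies $\cost(\text{MF},\R)/\opt(\R)\ge k\mu/(\mu+k/2)\to 2\mu$, which combined with the Any Fit bound proves the theorem.

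The only delicate step is the inductive claim that once a bin's load reaches $1/2-\eps+\eps'$ it rejects every subsequent odd item, uniformly over all open bins that MTF scans in MRU order. This is a purely arithmetic check reducing to the inequality $\eps'>2\eps$, but it is precisely what rules out any consolidation that MTF's multi-bin lookahead might otherwise afford over Next Fit, and so it is what makes the 1-D Next Fit construction carry over to MTF.
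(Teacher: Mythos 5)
Your proposal is correct and follows essentially the same route as the paper: the $(\mu+1)d$ half is inherited from Theorem~\ref{thm:lb-anyfit}, and the $2\mu$ half uses the same one-dimensional alternating sequence of a large short-duration item followed by a tiny long-duration item, forcing Move To Front to open a fresh bin for each pair and keep it alive for time $\mu$, while the optimum consolidates all tiny items into one bin. The only (immaterial) difference is that the paper takes the odd items to have size exactly $1/2$ and the even items size $1/(2n)$, whereas you perturb to $1/2-\eps$ and $\eps'$ with $\eps'>2\eps$; both yield $\cost/\opt \to 2\mu$.
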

\begin{proof}
Note that the lower bound of $(\mu+1)d$ follows from Theorem~\ref{thm:lb-anyfit} since Move To Front is an Any Fit packing algorithm. For the lower bound of $2\mu$, consider the following one-dimensional example.

For a parameter $n\ge 1$, let $\R = \{1,2,\dots, 4n\}$ be a sequence of $4n$ items which arrive in that order at time $0$. The odd-indexed items have size $1/2$ and active interval $[0,1)$. The even-indexed items have size $1/(2n)$ and active interval $[0,\mu)$.

Consider the execution of Move To Front on $\R$. Items $1$ and $2$ are placed into a single bin $B_1$, which is then loaded at $1/2+1/(2n)$. Therefore item 3 does not fit into bin $B_1$, and a new bin $B_2$ is opened to accommodate item $3$. Since $B_2$ is now the most-recently used bin, it is moved ahead of $B_1$, and therefore receives the next item $4$, leading to a load of $1/2+1/(2n)$. Now item $5$ cannot be placed in either $B_1$ or $B_2$, causing another bin to be opened. Continuing this way, we can observe that Move To Front (MF) will create $2n$ bins. Since each bin contains an even-indexed item, each bin will be active for a duration of $\mu$. Thus $\cost(MF, \R) = 2n\cdot\mu$.

On the other hand, the optimum algorithm can pack all the even-indexed items into one bin since there are $2n$ such items of size $1/(2n)$. This bin is active in $[0,\mu)$. The remaining $2n$ odd-indexed items of size $1/2$ each can be paired up an placed in $n$ bins, each of which is active in $[0,1)$ This implies $\opt(\R) \le \mu+n$. Thus the competitive ratio of Move To Front satisfies:
\[CR(MF) \ge \frac{\cost(MF,\R)}{\opt(\R)} \ge \frac{2n\mu}{\mu + n} = \frac{2\mu}{1+\frac{\mu}{n}}.\]
Since $n$ is arbitrary, in the limit $n\rightarrow\infty$, we have $CR(NF) \ge 2\mu$, proving the claimed lower bound. 
We note that the same example was used by \cite{RenTangLiCai2017,TangLiRenCai2016} to establish a lower bound of $2\mu$ on the competitive ratio of Next Fit.
\end{proof}

\begin{remark}\normalfont
The above theorem implies that the lower bound is $2\mu$ for $d=1$, and $(\mu+1)d$ for $d\ge 2$. We leave improving the lower bound of Move To Front for $d\ge 2$ as an interesting open question.
\end{remark}

\section{Experimental Evaluation}\label{sec:experiments}
In this section, we perform an experimental study evaluating the average-case performance of several Any Fit packing algorithms. 

\paragraph{Experimental Setup.}
In addition to Move To Front, First Fit and Next Fit, we study the following Any Fit algorithms:
\begin{enumerate}[leftmargin=*]
\item Best Fit, with the load of a bin containing a set $R$ of items is defined as $w(R) = \norm{\size(R)}_\infty$.
\item Worst Fit, which tries to place an item in the least loaded bin.
\item Last Fit, which in contrast with First Fit, tries to place an item in the bin which with the latest opening time.
\item Random Fit, which tries to place an item in a bin selected uniformly at random from the list of open bins.
\end{enumerate}

We evaluate the performance of these algorithms on randomly-generated input sequences. Our experimental setup closely follows the setup of \cite{KamaliLopezOrtiz2015} for the 1-D case.  In $d$-dimensions, we assume that each bin has size $\vect{B}^d$, for integers $d, B\ge 1$. Each item is assumed to have a size in $\{1,2,\dots,B\}^d$. For an integral value $T$ of the span, we assume each item arrives at an integral time step in $[0,T-\mu]$ and has an integral duration in $[1,\mu]$, for integral $\mu\ge 1$. Each instance in our experiment is a sequence of $n$ items, where the size and duration of each item is selected randomly from their ranges, assuming a uniform distribution. For different settings of the parameters as described in Table~\ref{tab:expts}, we generate $m = 1000$ input instances. Since the computation of the optimal packing is NP-hard, we evaluate the performance of an algorithm by comparing its packing cost to the lower bound on OPT from Lemma~\ref{lem:lb-opt} (i).

\begin{table}[b]
\centering
\begin{tabular}{|c|c|c|}\hline
Parameter & Description & Value \\\hhline{|=|=|=|}
$d$ & Num. dimensions & $\{1,2,5\}$ \\
$n$ & Sequence length & $n=1000$ \\
$\mu$ & Max. item length & $\{1,2,5,10,100,200\}$ \\
$T$ & Sequence span & $T=1000$ \\
$B$ & Bin size & $B=100$ \\\hline
\end{tabular}
\caption{\normalfont Summary of experimental parameters}
\label{tab:expts}
\end{table}

Our experimental results are shown in Figure~\ref{fig:expts}. 
For each combination of parameters $d\in\{1,2,5\}$ and $\mu\in\{1,2,5,10,100,200\}$, we plot the average performance of our algorithms in consideration, with error bars measuring the standard deviation. 

\begin{figure}
\centering
\begin{subfigure}[b]{0.47\textwidth}
   \includegraphics[width=\textwidth]{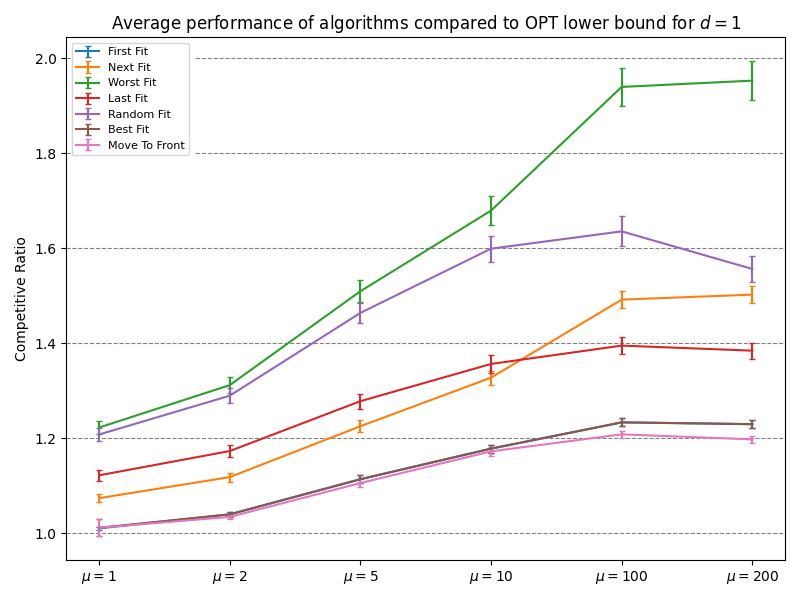}
  \caption{\normalfont $d=1$}
   \label{fig:d1} 
\end{subfigure}

\begin{subfigure}[b]{0.47\textwidth}
   \includegraphics[width=\textwidth]{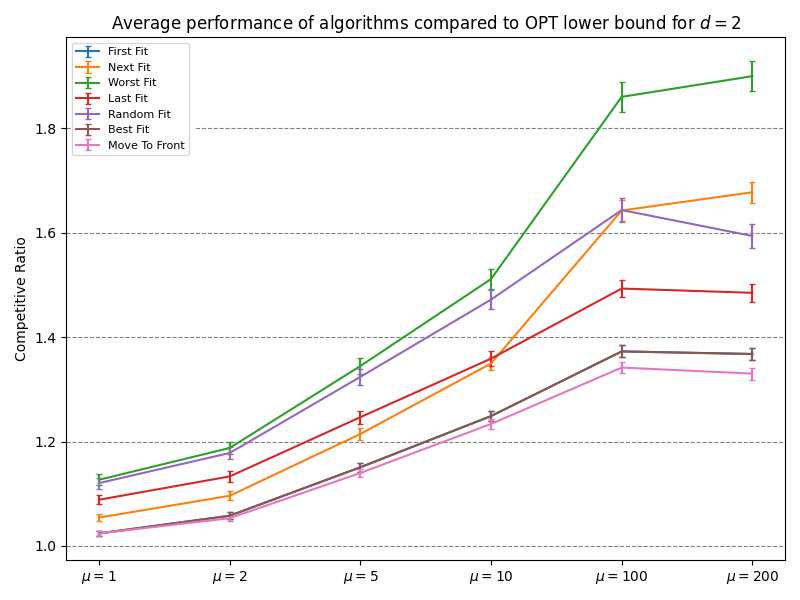}
  \caption{\normalfont $d=2$}
   \label{fig:d2}
\end{subfigure}

\begin{subfigure}[b]{0.47\textwidth}
   \includegraphics[width=\textwidth]{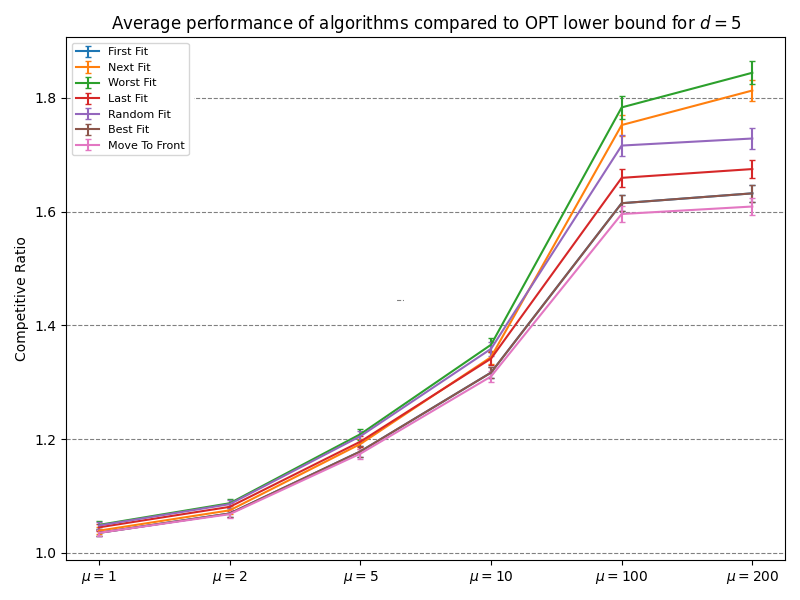}
  \caption{\normalfont $d=5$}
   \label{fig:d5}
\end{subfigure}
\caption{\normalfont Average-case performance of Any Fit packing algorithms for different values of $\mu$ and $d$. Error bars measure std. deviation. }\label{fig:expts}
\end{figure}

\paragraph{General Observations.} We observe that Move To Front has the best average-case performance among all algorithms, even in multiple dimensions. Close in performance are First Fit and Best Fit, which have nearly identical performance (the blue and brown curves are nearly superimposed), with First Fit having generally lower variance. Following these are Next Fit, Last Fit and Random Fit with the performance of Next Fit degrading with higher values of $\mu$, i.e., longer jobs on average. As expected, Worst Fit has the worst performance since it packs items inefficiently. Random Fit and Worst Fit are also seen to have a higher variance indicating a high variability in performance. In contrast, Move To Front, First Fit and Best Fit have low variance in performance.

\paragraph{Packing and Alignment.} We attempt to intuitively explain our experimental results. As \cite{KamaliLopezOrtiz2015} discuss, the quality of a solution is influenced by \textit{packing} and \textit{alignment}. Packing refers to how tightly items are packed together and how much space is wasted, while alignment refers to how well-aligned items are in terms of their durations. Better packing leads to improved performance since lower number of bins need to be opened. Better alignment saves on cost because if items arriving at almost the same time are packed together then in expectation they all depart together, preventing solutions in which multiple bins are active, each containing a small number of long-duration jobs.

The performance of Best Fit (resp. Worst Fit) can therefore be explained due to its excellent (resp. poor) packing. Next Fit generally should lead to well-aligned solutions since it tries to fit items into one bin, however does not factor packing into consideration as it only maintains one open bin. On the other hand, Move To Front does relatively well on both fronts: by using the most recently-used bin it leads to well-aligned solutions, and since it keeps all bins open it does not open many new bins like Next Fit.  These intuitive ideas can also be used to explain the observation that the performance of Next Fit worsens with large $\mu$: for larger $\mu$, it is more likely that bins remain open for a longer duration, hence it is better to avoid opening many new bins (which is what Next Fit does) to save cost. 

\paragraph{Theory vs Practice.} Our work invites an interesting discussion contrasting theory and practice. While the competitive ratio (CR) of Best Fit is theoretically unbounded, it has a good average-case performance in practice. In contrast, although the CR of Next Fit is theoretically bounded, it does not do as well in practice. Finally, although the bound on the CR of First Fit is lower than the proved lower bound on the CR of Move To Front, on average Move To Front has better performance than First Fit. These observations suggest theoretically studying the average-case performance of these algorithms against input sequences arising from specific distributions (such as the uniform distribution considered in our experiments) as an interesting direction for future research.

\section{Concluding Remarks}\label{sec:conclusion}
In this paper, we studied the MinUsageTime Dynamic Vector Bin Packing problem, where the size of an item is a $d$-dimensional vector, modelling multi-dimensional resources like CPU/GPU, memory, bandwidth, etc. We proved almost-tight lower and upper bounds on the competitive ratio (CR) of Any Fit packing algorithms such as Move To Front, First Fit and Next Fit. Notably, we showed that Move To Front has a CR of $(2\mu+1)d+1$, thus significantly improving the previously known upper bound of $6\mu+7$ for the 1-D case. Our experiments show that Move To Front has superior average-case performance than other Any Fit packing algorithms.

We discuss some interesting directions for future work. The first is to close the gap between the upper and lower bounds presented in this paper. Concretely, investigating if the lower bound of $\max\{2\mu, (\mu+1)d\}$ on the CR of Move To Front can be improved to $2\mu d$ is a natural first step.
Another direction is to design algorithms with improved CR for small number of dimensions, such as $d=2$. 
Lastly, studying the problem when given additional information about the input, perhaps obtained using machine learning algorithms, is another direction for future work. For instance, studying the clairvoyant DVBP problem, where the duration of an item is accurately known at the time of its arrival, is an interesting question.  

\bibliography{references}

\appendix
\section{Missing Proofs}
\subsection{Proof of Proposition~\ref{prop:norm}}\label{app:prelim}
\propnorm*
\begin{proof}
Part (i) and the first inequality of part (ii) are immediate from the definition of $L_\infty$ norm. The second inequality of part (ii) follows form the following series of inequalities.
\begin{equation*}
\begin{aligned}
\sum_{i=1}^n \norm{\vect{v}_i}_\infty &\le \sum_{i=1}^n\sum_{j=1}^d (\vect{v}_i)_j &\text{(since $\norm{\vect{v}_i}_\infty = \max_{j\in [d]} (\vect{v}_i)_j$)}\\
&= \sum_{j=1}^d\sum_{i=1}^n (\vect{v}_i)_j &\text{(changing the order of sum)}\\
&\le d\cdot\max_{j\in [d]}\sum_{i=1}^n (\vect{v}_i)_j \\
&= d \cdot \bigg\lVert{\sum_{i=1}^n \vect{v}_i}\bigg\rVert_\infty. &\text{(by definition of norm)}
\end{aligned}
\end{equation*}
\end{proof}

\end{document}